\gdef\@copyrightpermission{
  \begin{minipage}{0.2\columnwidth}
   \href{https://creativecommons.org/licenses/by/4.0/}{\includegraphics[width=0.90\textwidth]{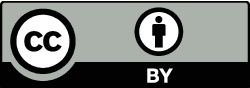}}
  \end{minipage}\hfill
  \begin{minipage}{0.8\columnwidth}
   \href{https://creativecommons.org/licenses/by/4.0/}{This work is licensed under a Creative Commons Attribution International 4.0 License.}
  \end{minipage}
  \vspace{5pt}
}
\title[AAMAS-2025 Formatting Instructions]{Single-Agent Planning in a Multi-Agent System: \\A Unified Framework for Type-Based Planners}
\author{Fengming Zhu}
\affiliation{
  \institution{Hong Kong University of Science and Technology}
  \city{Hong Kong SAR}
  \country{China}}
\email{fzhuae@connect.ust.hk}
\author{Fangzhen Lin}
\affiliation{
  \institution{Hong Kong University of Science and Technology}
  \city{Hong Kong SAR}
  \country{China}}
\email{flin@cs.ust.hk}
\begin{abstract}
We consider a general problem where an agent is in a multi-agent environment and must plan for herself without any prior information about her opponents.
At each moment, this pivotal agent is faced with a trade-off between exploiting her currently accumulated information about the other agents and exploring further to improve future (re-)planning.
We propose a theoretic framework that unifies a spectrum of planners for the pivotal agent to address this trade-off.
The planner at one end of this spectrum aims to find exact solutions, while those towards the other end yield approximate solutions as the problem scales up.
Beyond theoretical analysis, we also implement \textbf{13} planners and conduct experiments in a specific domain called \textit{multi-agent route planning} with the number of agents \textbf{up to~50}, to compare their performaces in various scenarios.
One interesting observation comes from a class of planners that we call \textit{safe-agents} and their enhanced variants by incorporating domain-specific knowledge, which is a simple special case under the proposed general framework, but performs sufficiently well in most cases.
Our unified framework, as well as those induced planners, provides new insights on multi-agent decision-making, with potential applications to related areas such as mechanism design.
\end{abstract}
\keywords{Multi-agent planning; Opponent modelling; Tree search}
\newcommand{\BibTeX}{\rm B\kern-.05em{\sc i\kern-.025em b}\kern-.08em\TeX}
\begin{document}


\pagestyle{fancy}
\fancyhead{}


\maketitle 


\section{Introduction}
\label{sec:intro}

We consider a general problem where an agent is in a multi-agent system and is supposed to plan for herself without any prior information about her opponents.
A motivating example of such a scenario is in autonomous driving~\cite{chen2024end}, where a fully autonomous vehicle needs to navigate herself to the destination, but inevitably shares the roads with other cars that she may not know much about, ones either driven by human beings or possibly controlled by AI softwares from another company.
Therefore, she must learn the behavioral patterns of the other cars in real-time and act accordingly.

There are three unique features that distinguish this setting from other seemingly related ones.
\textbf{Firstly}, it is clearly different from a system where all the agents are under the control of a central planner, such as programming robot fleets to deliver packages in warehouses~\cite{stern2019multi-overview, stern2019multi, sharon2015conflict, li2021eecbs}, which is also known as \textit{multi-agent path finding} (MAPF) problems.
\textbf{Secondly}, the setting that we are investigating does not ensure pure competitiveness among participants, and therefore, cannot be modelled as zero-sum games (win or lose).
In typical zero-sum games, e.g., Go/chess~\cite{doi:10.1126/science.aar6404, schrittwieser2020mastering, danihelka2022policy, antonoglou2022planning} and poker~\cite{zinkevich2007regret},
a player can assume the other one will do the most harm, and compute a minimax strategy, which coincides with the Nash equilibrium (NE) and later serves as a never-lose strategy regardless of what strategy the opponent eventually follows.
\textbf{Thirdly}, an oracle that computes a sample NE dose not \textit{directly} offer much help, even if all agents are aligned with common interests~\cite{boutilier1996planning, claus1998dynamics}.
Chances are that there may be multiple NEs or the other agents are simply not playing the equilibrium strategies at all.

Conceptually, 
this general class of problems is usually addressed by conducting opponent modelling and opponent-aware replanning at the same time. 
Given no knowledge about the upcoming opponents, the controlled agent will start with an initial belief over possible opponent models (or types). With time going on, actions done by the other players are observed, which can be utilized to update the belief. Meanwhile, the revised belief can be employed for replanning. Essentially, to control the agent is to find a trade-off between exploiting current knowledge and acquiring more information about  opponents to enhance future replanning~\cite{carmel1998explore, carmel1999exploration}.

In principle, it can be formalized as a stochastic game~\cite{shapley1953stochastic, solan2015stochastic} with incomplete information~\cite{harsanyi1967games}, and the incompleteness is due to the lack of knowledge about opponents.
There is a vast literature on how to devise an effective strategy for the controlled agent, mostly based on best responses~\cite{kalai1993rational, boutilier1996planning, claus1998dynamics, carmel1998explore, carmel1999exploration, albrecht2015game, eck2020scalable, schwartz2023bayes, schwartz2023combining, rahman2023general}.
However, one can clearly observe two extremes along this line of work.
At one extreme, researchers apply deliberate formulations on toy artificial problems such as repeated games with one state transiting to itself~\cite{kalai1993rational, boutilier1996planning, claus1998dynamics, carmel1998explore, carmel1999exploration, albrecht2015game}, serving primarily as proof-of-concept examples.
In fact, repeated games cannot fully reflect the complexity underneath as they are just special cases of stochastic games.  
At the other extreme, people jump to highly approximated ones, namely Monte Carlo methods~\cite{albrecht2015game, eck2020scalable, schwartz2023bayes, schwartz2023combining} or complex Neural Network (NN) pipelines~\cite{rahman2023general}, yet the largest known domain remains only $10^{14}$ states with four agents~\cite{schwartz2023bayes, schwartz2023combining, rahman2023general}.
We here ask a significant research question: \textbf{Is there any viable formulation in between, and even beyond?}
More precisely, given such a multi-agent planning problem with $10^6$ states ($\ll 10^{14}$),
can we identify a planner that is more accurate than Monte Carlo ones, 
and is there any planner that can scale to even larger state spaces ($\gg 10^{14}$)?


%
%
%
%
%

To this end,
we first present a set of formulations that each of them \textit{rediscovers and generalizes} an existing one from the literature.
More specifically, to find an optimal plan, one has to compute an exact solution for the underlying infinite-horizon \textit{partially observable Markov decision process} (POMDP)~\cite{smallwood1973optimal, sondik1978optimal, kaelbling1998planning}.
This is only feasible for tiny problem instances, as optimally solving infinite-horizon POMDPs is computationally expensive~\cite{papadimitriou1987complexity, madani1999undecidability}.
As the environment scales up in terms of the number of agents, we have to appeal to approximated formulations.
We then propose a general framework that unifies this entire set of formulations.
This unified framework is designed to encompass a spectrum that interpolates the aforementioned formulations, enabling users to devise new ones for those problems of intermediate scales as well as much larger scales.
\textbf{The theoretic framework is basically to perform layered lookahead search and backup from values of leaf nodes in each layer, wherein different formulations implement different lookahead search.}
The induced formulation at one end of this spectrum aims at solving exact solutions, while the formulations towards the other extreme lead to approximated ones as the problem scales up.
Within our proposed framework, approximations can be made by controlling the depth of tree search, altering node evaluation functions, adopting sampling-based backup, toggling heuristic action selection, etc. The framework effectively connects \textit{POMDP}~\cite{kaelbling1998planning}, \textit{contextual-MDP/RL}~\cite{hallak2015contextual, benjamins2021carl}, \textit{tree search}~\cite{browne2012survey} and even \textit{constraint satisfaction}~\cite{brailsford1999constraint}.



It is highly non-trivial to compare those formulations purely in theory. Therefore, we implement a group of planners\footnote{Code available at
\url{https://github.com/Fernadoo/BestResponsePlanning}.}
resulted from those formulations in a concrete domain, called \textit{multi-agent route planning} (MARP)
This benchmark is challenging in two aspects: 1)~it is no longer repeated games, instead, involves complex state transitions; 2)~it necessitates long-horizon planning, as the rewards are goal-conditioned and hence sparse.
We select this domain on purpose to take the advantage that it can 
generate problem instances (parameterized by the map size and the number of agents) along a fine-grained axis of scales, from around 70 states with two agents to around $4.62 \times 10^{145}$ states with 50 agents, matching our need of studying the capacities of the planners in the spectrum.
As it is impossible to run each proposed planner against all possible hypothetical opponents and their combinations,
we instead test our planners against three representative groups of opponents, including rational ones, malicious ones, and self-playing ones.
Among all the planners, we also draw one's attention to a particular class, what we call \textit{safe-agents} and their enhanced variants, which turns out to be a special case of our general framework but performs sufficiently well in most cases. It is also simple enough to save a great amount of computation.

\section{Related Work}
\label{sec:related}

The problem that we investigate pertains to a board range of theoretic areas, primarily \textit{opponent modelling} and the control theory of \textit{POMDP}, alongside  practical tools, such as efficient \textit{NE solvers}.

\textbf{Opponent Modelling.}
\citet{albrecht2018autonomous} surveyed a number of methodologies under this topic, where our work falls into the category of type-based reasoning.
Types usually refer to predefined opponent models.
For instance, \citet{carmel1998explore, carmel1999exploration} investigate a similar problem by modelling opponent strategies as deterministic finite automata,
while other researchers have approached this by using reactive policy oracles, either given by human experts~\cite{albrecht2015game} or trained by separated pipelines~\cite{schwartz2023bayes, schwartz2023combining}.
An critical follow-up issue is how to integrate opponent modelling with opponent-aware planning. 
For repeated games, much of the literature emphasizes planners that best respond to the current belief over hypothetical opponent models, e.g., $\epsilon$-BR~\cite{kalai1993rational}, IL/JAL~\cite{boutilier1996planning, claus1998dynamics}, CJAL~\cite{banerjee2007reaching}, etc.
For a more general setting, HBA~\cite{albrecht2015game} is proposed as a belief-dependent planning operator that takes long-term returns into account. The authors have also examined some convergence results~\cite{albrecht2019convergence}, and the conditions under which the belief correctly reveals the truth~\cite{albrecht2016belief}.
Recent studies have also explored MCTS methods~\cite{schwartz2023bayes, schwartz2023combining}, and neural network approximations~\cite{rahman2023general}.

\textbf{POMDP.}
We later formalize the underlying problem as a \textit{Contextual MDP} (CMDP)~\cite{hallak2015contextual}, where the set of contexts corresponds to the set of all possible opponent models.
However, CMDPs reduce to POMDPs~\cite{smallwood1973optimal, sondik1978optimal, kaelbling1998planning} when those contexts are not revealed. 
One can therefore resort to existing techniques in the literature of the POMDP theory to alleviate the computational burden~\cite{littman1995learning, zhang2001speeding}.
Moreover, some meta/contextual learning techniques developed by the RL community may also be related \cite{beck2023survey, benjamins2021carl}.


\textbf{NE Solvers.}
Lastly, we want to clarify that our work may \textit{not directly} align with the interests of researchers who are investigating the problem of \textit{Multi-Agent Reinforcement Learning} (MARL)~\cite{lowe2017multi, foerster2018counterfactual, yu2022the, pmlr-v162-fu22d, JMLR:v24:22-0169, samvelyan19smac} or \textit{Multi-Agent Path Finding} (MAPF)~\cite{stern2019multi-overview, stern2019multi, sharon2015conflict, zhang2022multi, li2021eecbs}.
Both lines of work focus on solving a sample NE in a centralized manner, which does not strictly fall into the scope of our discussion as they force all agents to execute the NE plan that is found upfront.
Nevertheless, as we will show that with the help of an oracle that computes an NE very fast in real-time, our proposed framework can effectively repair it while replanning.
We emphasize that this routing domain resembling MAPF is selected as our benchmark solely for the purpose of evaluating and illustrating the capabilities of our proposed planners.

%


%
%
%


\section{Model and Solution Concepts}
\label{sec:model}

\subsection{Stochastic Game}

The whole system where the agents interact is modelled as a \textit{stochastic game} (SG, also known as Markov games)~\cite{shapley1953stochastic, solan2015stochastic}, which 
can be seen as an extension of both \textit{normal-form games} (to dynamic situations with stochastic transitions) and \textit{Markov decision processes} (to strategic situations with multiple agents).
A stochastic game
is a 5-tuple $\langle\mathcal{N},\mathcal{S}, \mathcal{A}, T, R\rangle$ given as follows,
\begin{enumerate}
	\item $\mathcal{N}$ is a finite set of agents.
	\item $\mathcal{S}$ is a finite set of normal form games. We call each possible $S \in \mathcal{S}$ a stage game. In the proceeding discussion, each $S$ is also called an environment state, or just a state.
	\item $\mathcal{A} = \mathcal{A}_1 \times \cdots \times \mathcal{A}_n$ is a set of joint actions, where $\mathcal{A}_i$ is the action set of agent $i$. We write $a_i$ as the action of agent $i$ and $a = (a_i, a_{-i})$ (without any subscript) as the joint action.
	\item $T: \mathcal{S} \times \mathcal{A}_1 \times \cdots \mathcal{A}_n \mapsto \Delta(\mathcal{S})$ defines a stochastic transition among stage games.
	\item $R_i: \mathcal{S} \times \mathcal{A}_1 \times \cdots \mathcal{A}_n \mapsto \mathbb{R}$ denote the utility function of agent $i$ in each stage game.
\end{enumerate}
Each agent $i$ is only aware of her own utilities $R_i$. In fact, knowing other agents' utility functions does not help, as she has no knowledge about the opponents until the game begins and will be possibly exposed to different opponents in each match.
This adds extra difficulty of evaluation as we will cover that later in Section \ref{sec:benchmark}.
Following the terminology in \cite{albrecht2018autonomous}, we also call agent $i$ the modelling agent under our control, and agent $-i$ the modelled agents.

\subsection{The Modelling Agent's Problem}

The modelling agent under such an environment acts in a manner of maximizing her own accumulated discounted utility, i.e.,
$
	\mathbb{E}_{\tau \sim SG}[\sum_{t=0}^{T(\tau)} \gamma^tR_{i,t}],
$
where $\tau$ denotes the trajectory of one single match of the SG, $\gamma < 1$ denotes the discount factor, and $T(\tau)$ denotes the length of the trajectory.
For the sake of simplicity, we assume perfect recall, i.e., the modelling agent can remember the entire history from the beginning of the match up to the current moment, including every past states and joint actions.
Formally, a history segment is given as $h \in \mathcal{H} \triangleq (\mathcal{S} \times \mathcal{A})^*$, i.e. a sequence consisting of alternating states and joint-actions.

Therefore, a candidate solution concept for the modelling agent is a history-depenent stochastic policy $\pi_i: \mathcal{H} \times \mathcal{S} \mapsto \Delta(\mathcal{A}_i)$ that maps from all possible history segments and current states to (randomized) available actions.
From agent $i$'s perspective, she also assumes every other opponent $j$'s potential strategy is drawn from a set of basis policies $\Pi_j$, where each $\pi_j^k \in \Pi_j$ is a stationary Markov policy that maps states to actions,
i.e. $\pi_j^k: \mathcal{S} \mapsto \Delta(\mathcal{A}_j)$.
This is indeed a strong assumption that suggests those opponents do not care about histories, and they are not learning and evolving during the game play.

We then show, from the modelling agent's perspective, conditioned on the strategies of opponents, one can model the rest of the problem as a \textit{Contextual  Markov Decision Process} (CMDP)~\cite{hallak2015contextual}.
The resulted formulation is given as a 3-tuple $\langle\mathcal{C}, \mathcal{A}, \mathcal{M}\rangle$,
\begin{enumerate}
	\item $\mathcal{C}$ is a set of contexts. In our setting, $\mathcal{C} \triangleq \Pi_{-i} = \prod_{j\neq i} \Pi_j $, i.e., the set of contexts are exactly the set of all possible combinations of opponent models.
	\item $\mathcal{A}$ is a set of actions. In our setting, $\mathcal{A} \triangleq \mathcal{A}_i$, as we are approaching the problem from agent $i$'s perspective.
	\item $\mathcal{M}$ is a mapping from contexts to corresponding Markov decision processes.
	An interpretation is, if the modelling agent knows for sure how her opponents would act, then she is basically faced with a single-agent MDP with opponent behaviors subsumed into transition/utility functions as noises.
	The resulted MDP from the perspective of agent $i$, denoted as $\mathcal{M}(\pi_{-i})$,
	is induced as a 5-tuple
	$\langle\mathcal{S}, \mathcal{A}_i, T^{\pi_{-i}}, R^{\pi_{-i}}, \gamma\rangle$:
	\begin{itemize}
		\item $\mathcal{S}, \mathcal{A}_i$ and $\gamma$ inherit from the previous setup,
		\item $T^{\pi_{-i}}(S'| S, a_i) \triangleq \sum_{a_{-i} \in \mathcal{A}_{-i}}T(S'|S,a)\pi_{-i}(a_{-i}|S)$,
		\item $R^{\pi_{-i}}(S, a_i) \triangleq \sum_{a_{-i} \in \mathcal{A}_{-i}}R_i(S,a)\pi_{-i}(a_{-i}|S)$,
	\end{itemize}
\end{enumerate}
If the context is revealed, it is equivalent to augmenting the MDP state by an additional dimension representing the context, and such a CMDP can therefore be factorized into $|\mathcal{C}|$ MDPs and solved one by one via any off-the-shelf solver accordingly.
However, the issue is that in our situation the context is unobservable as it represents the unknown strategies of opponents.
It then reduces to solving the corresponding POMDP~\cite{smallwood1973optimal, sondik1978optimal, kaelbling1998planning}, where a state in the POMDP consists of the environment state and the underlying context (opponent model), and the observation function takes as input a POMDP state, and returns only the environment state exposed to everyone yet keeps the context (opponent model) hidden.

We also assume that the pivotal agent model every other opponent \textit{independently}, and each independent belief is a distribution over potential basis policies. Mathematically, for any belief $b \in \mathcal{B} \triangleq \prod_{j\neq i}\Delta(\Pi_j)$, we have $\pi_{-i}(a_{-i}|s) = \prod_{j\neq i}\pi_j(a_j|s)$ and $b(\pi_{-i}) = \prod_{j\neq i}b(\pi_j)$.
Depending on the set of presumed opponent policies, any belief $b_t$ at time step $t$ can be updated to a successor belief $b_{t+1}$ inductively by a revision operator, denoted as $\xi: \mathcal{B}\times \mathcal{S}\times \mathcal{A} \mapsto \mathcal{B}$. Equivalently speaking, given a prior belief and a sequence consisting of past states, and the actions taken under each corresponding state, a posterior belief can be inferred. Here we formulate it in a Bayesian way, $b_{t+1} = \xi(b_t, S, a)$ is given as,
\begin{equation}
\begin{split}
	b_{t+1}(\pi_j^k) \triangleq \mathbb{P}[\pi_j^k | S, a_j]
	& = \frac{ (\mathbb{P}[a_j|S, \pi_j^k]\cdot \mathbb{P}[\pi_j^k|S])^{1/\beta} }
	{\sum_{\pi_j^l \in \Pi_j} (\mathbb{P}[a_j|S, \pi_j^l]\cdot \mathbb{P}[\pi_j^l|S])^{1/\beta} } \\
	& = \frac{ (\pi_j^k(a_j|S)\cdot b_t(\pi_j^k))^{1/\beta} }
	{\sum_{\pi_j^l \in \Pi_j} (\pi_j^l(a_j|S)\cdot b_t(\pi_j^l))^{1/\beta} } \\
\end{split}
\label{eq:bayes}
\end{equation}
where $\beta$ is a tunable temperature parameter.
If $\beta = 1$ is always the case, it will be exactly the same as how beliefs are computed in the corresponding POMDP. However, altering the value of $\beta$ will open up a broader set of choices for implementations, as we will illustrate in Table \ref{tab:all_planners}.


\subsection{Potential Solution Formulations}
\label{sec:solutions}

\subsubsection{Exact Dynamic Programming}


As we mentioned, the modelling agent is faced with a CMDP with unobservable contexts which subsequently reduces to a POMDP.
Following from the control theory of POMDP~\cite{smallwood1973optimal}, one can actually find out that at each moment a compound \textit{information state} consisting of the environment state and the belief serves as a sufficient statistic, summarizing all the past history.
Hence, the transitions among those information states from the modelling agent's perspective, denoted as $\mathcal{T}: \mathcal{S}\times \mathcal{B} \times \mathcal{A}_i \mapsto  \Delta(\mathcal{S})\times \mathcal{B}$, are resulted from a joint effect of the environment transition and a sampling process of opponents' actions according to the belief. Mathematically,
{\small
\[
\mathcal{T} \Big ( (S', b') \Big| (S, b), a_i  \Big ) \triangleq 
\sum_{\pi_{-i}\in b} b(\pi_{-i})\sum_{a_{-i} \in \mathcal{A}_{-i}} T(S'|S, a) \pi_{-i}(a_{-i}|S)
\]
}
when $T(S'|S, a) > 0 \text{ and } b' = \xi(b, S, a)$, and all the other transitions are invalid and assigned zero probability.
The expected reward from the modelling agent's perspective is induced analogously,
{\small
\[
\mathcal{R} \Big((S, b), a_i \Big ) \triangleq 
\sum_{\pi_{-i}\in b} b(\pi_{-i})\sum_{a_{-i} \in \mathcal{A}_{-i}} R_i(S, a) \pi_{-i}(a_{-i}|S)
\]
}
Then, the Bellman optimality equation can be established over a continuous space,
\begin{equation}
{\small
\begin{split}
	& V_i(S,b) \\
	& = \max_{a_i \in \mathcal{A}_i}
	\bigg\{  \mathcal{R} \Big( (S, b), a_i \Big)
	+ \gamma \sum_{S', b'} \mathcal{T} \Big( (S', b') \Big| (S, b), a_i  \Big)\cdot V_i(S',b') \bigg\} \\
	& = \max_{a_i\in \mathcal{A}_i}\{\mathbb{E}_{ \substack{\pi_{-i} \sim b, \\ a_{-i}\sim \pi_{-i}} }
	[R_i(S,a) +  \gamma \sum_{S'}T(S'|S,a) V_i(S', b')]\}
\end{split}
}
\label{eq:pomdp}	
\end{equation}
when $b' = \xi(b, S, a)$, and $V_i(\cdot,\cdot)$ is the desired optimal value function for the modelling agent.
Again, if $\xi$ is implemented with $\beta = 1$, then solving Equation~(\ref{eq:pomdp}) is equivalent to optimally solving the corresponding infinite-horizon POMDP.
Despite that it is computationally costly in theory as revealed by~\cite{papadimitriou1987complexity, madani1999undecidability},
we can resort to off-the-shelf POMDP solvers like \textit{pomdp-solve}
\footnote{https://pomdp.org/code/} to solve small problem instances in practice, as we will show in Appendix~\ref{apd:case_study}.


As one may notice, this formulation resembles the HBA operator proposed by~\cite{albrecht2015game}, where the authors thereof call it a best-response rule based on Bellman control. We here point out that it is not just any random rule, but also the exact characterization of what the modelling agent is faced with, and the optimal control strategy can be therefore \textit{derived} and computed upfront.


In fact, Equation~(\ref{eq:pomdp}) also leads to potential (contextual-)RL solutions~\cite{benjamins2021carl}.
Subsequently, the most challenging part lies in how to effectively train such a policy that converges to the desired optimum.
As we will show in Appendix~\ref{apd:exp_setting}, there is usually an intermediate plateau phase before convergence, which we conjecture is caused due to the gap between feasible strategies and optimal strategies. 


\subsubsection{Belief-Induced MDP}
\label{sec:belief_induced_mdp}


As an alternative, one can amortize the burden of computing an optimal strategy completely in advance over repeated online replanning during the game play.
We first extend the aforementioned context-to-MDP mapping $\mathcal{M}(\pi_{-i})$ to $\mathcal{M}(b)$, allowing inputs of distributions of contexts,
\begin{itemize}
\item $T^{b}(S'| S, a_i) \triangleq \sum_{\pi_{-i}\in b} b(\pi_{-i})\sum_{a_{-i} \in \mathcal{A}_{-i}}T(S'|S,a)\pi_{-i}(a_{-i}|S)$
\item $R^{b}(S, a_i) \triangleq \sum_{\pi_{-i}\in b} b(\pi_{-i})\sum_{a_{-i} \in \mathcal{A}_{-i}}R_i(S,a)\pi_{-i}(a_{-i}|S)$
\end{itemize}
Therefore, optimally solving the belief-induced MDP $\mathcal{M}(b_t)$ at each time $t$ is equivalent to solving a surrogate target,
\begin{equation}
\begin{split}
& V_i(S,b_t) \\
&  = \max_{a_i\in A_i}\{\mathbb{E}_{ \substack{\pi_{-i} \sim b_t, \\ a_{-i}\sim \pi_{-i}} }
[R_i(S,a) +  \gamma \sum_{S'}T(S'|S,a) V_i(S', b_t)]\}	
\end{split}
\label{eq:belief_mdp}
\end{equation}
where $V_i(\cdot,b_t)$ is the optimal value function characterizing the optimal control of the belief-induced MDP $\mathcal{M}(b_t)$.
Note that the belief on the LHS is the same as that on the RHS, i.e., $b_t$ is fixed as a hyperparameter.
That is to say, the modelling agent is assuming the others will play the stationary mixed strategies throughout the rest of the game according to the current belief.

One may notice that solving such an online surrogate target involves inherent inconsistency issues. That is, the effect on revising the belief by observing future plays of the opponents is ignored while inducing such MDPs, which is exactly the cause that leads to sub-optimal strategies for the modelling agent.

Understandably, one can then come up with two alternatives. If solving a newly induced MDP at each move is computationally feasible, then repeatedly updating the belief and solving the MDP induced by the updated belief will be a better choice.
In contrast, if compiling and solving an MDP is costly or the belief itself is not worthy of iterative update, one might as well only compute the optimal policy of the MDP induced by a cautiously selected initial belief and commit to it until the end of the game. 



By Equation (\ref{eq:belief_mdp}) we generalize what is called \textit{independent learners} (ILs)~\cite{claus1998dynamics} in repeated games, which evaluate the accumulated return of the modelling agents with the opponents subsumed in the environment as stationary noises and best respond accordingly, to the ones that act under the same principle in stochastic games. 

\subsubsection{Belief-mixed MDP (QMDP)}

One may want to figure out the best response against each opponent model in the first place and then mix them afterwards according to the real-time belief, i.e.,
the action choice at each time step $t$ is hence given as
\begin{equation}
\begin{split}
a_t^* 
& \in \arg\max_{a_i\in \mathcal{A}_i}  \sum_{\pi_{-i} \in \Pi_{-i}} b_t(\pi_{-i})\cdot
	Q_{\mathcal{M}(\pi_{-i})}(S, a_i) \\
& = \arg\max_{a_i\in \mathcal{A}_i} \sum_{(S, \pi_{-i}) \in \mathcal{S}\times \Pi_{-i}} \mathbb{P}[(S, \pi_{-i})] \cdot Q((S, \pi_{-i}), a_i)  \\
\end{split}
\label{eq:qmdp}
\end{equation}
where $Q_{\mathcal{M}(\pi_{-i})}$ is the optimal Q-function of $\mathcal{M}(\pi_{-i})$ that can be solved upfront, but there are in total $|\Pi_{-i}|$ MDPs to solve, which grows exponentially in terms of the number of agents.
In fact,  the second line is exactly the QMDP formulation from the POMDP community~\cite{littman1995learning},
which pretends that the modelling agent can observe the underlying opponent model, and the Q-function $Q((S, \pi_{-i}), a_i)$ is obtained by optimally solving the underlying hypothetical MDP.
The equality holds because the opponents are assumed not evolving their strategies, and therefore, those factorized MDPs, namely all such $\mathcal{M}(\pi_{-i}), \forall \pi_{-i} \in \Pi_{-i}$, are independent of each other.




Moreover, Equation (\ref{eq:qmdp}) extends 
\textit{joint action learners} (JALs)~\cite{claus1998dynamics} in repeated games to the ones that act under the same principle in stochastic games. A JAL learns Q-values with respect to all possible opponent actions, and then assesses the best expected return by mixing the Q-values over the observed action distribution of the opponent.
Clearly, ours generalizes opponent actions (in single-state scenarios) to presumed opponent policies (in multi-state scenarios), on which a corresponding Q-function is derived.

\textit{We postpone the pseudocode for these three planners to Appendix~\ref{apd:planner_detail}, and some case studies to Appendix~\ref{apd:case_study}.}

\section{Unification}

\label{sec:unification}

In this section, we propose a unified planning framework conceptually based on tree search.
The general framework has all the aforementioned formulations embedded and even sheds lights on new ones.
Given the state $S_t$ and the belief $b_t$ at each step $t$, we do \textit{three layers of lookahead search} rooted at the node $(S_t, b_t)$, denoted as $(S^0_t, b^0_t)$ inside the tree (with $t$ omitted below),

%
%
\begin{enumerate}[left=0pt, itemsep=0pt]
	\item Belief-updated lookahead for the first $n$ levels, i.e., for $l\in[0, n)$:
	\[
		V_i(S^{l},b^l)
		\gets \max_{a_i\in \mathcal{A}_i}
		\sum_{ \substack{\pi_{-i} \in b^{l}, \\ a_{-i} \in \pi_{-i}} }
		[R_i(S^{l},a)
		+  \gamma \sum_{S^{l+1}}T(S^{l+1}|S^l,a) V_i(S^{l+1}, b^{l+1})],
	\label{eq:belief_bellman}
	\]
	where $b^{l+1} = \xi(b^l, S^l, a)$,
	
	\item Belief-fixed lookahead for the next $m$ levels, for $l\in [n,n+m)$:
	\[
		V_i(S^{l},b^{n})
		\gets \max_{a_i\in \mathcal{A}_i}
		\sum_{ \substack{\pi_{-i} \in b^{n}, \\ a_{-i} \in \pi_{-i}} }
		[R_i(S^{l},a)
		+  \gamma \sum_{S^{l+1}}T(S^{l+1}|S^l,a)V_i(S^{l+1}, b^{n})],
	\label{eq:van_bellman}
	\]
	\item Heuristic evaluation for the leaf nodes in level $(n+m)$:
	\[
		V_i(S^{n+m},b^{n}) \gets \textsc{Eval}_i(S^{n+m},b^{n}),
	\]
	where $\textsc{Eval}_i(\cdot)$ can be any heuristic value function that estimates a reasonable future return for the modelling agent $i$.
\end{enumerate}
Note that, for long-horizon planning problems, especially goal-conditioned ones, the planner is supposed to lookahead in depth to ensure non-trivial backup from leaf nodes, which implies that either $(n+m)$ should be large enough or $\textsc{Eval}_i$ should be sufficiently informative.
In fact, the possession of such a heuristic enables one to evaluate any given situation by directly enquiring $\textsc{Eval}_i(S_t,b_t)$, without doing any lookahead search.
\textit{Nevertheless, the lookahead search by the first two layers is highly necessary as it serves as an online policy improvement operator.}
We postpone the proof of this rather intuitive statement to Appendix~\ref{apd:converge}.

By various options of $(n, m, \textsc{Eval}_i)$,
we here reproduce the aforementioned three formulations, and introduce a new forth one, 
\begin{enumerate}[label=\texttt{F\arabic*.}]
	\item If $n = \infty$, then it is equivalent to exactly solving the infinite-horizon POMDP \textbf{as Equation~(\ref{eq:pomdp})}, and there is no need to go to layer~(2) and (3), i.e., $m=0$ and $\textsc{Eval}_i(\cdot) = any$. No replanning will be needed.
	\item If $n=0, m=\infty, \textsc{Eval}_i(\cdot) = any$, then it is equivalent to solving the belief-induced MDP \textbf{as Equation~(\ref{eq:belief_mdp})}. One should note that $\mathcal{S}$ is a finite set. Therefore, layer (2) search can instead be implemented as dynamic programming to handle value backup on repeated states. 
	\item If $n=0$, $m=0$, and
	\[
	\textsc{Eval}_i(S,b) =
	\max_{a_i\in \mathcal{A}_i} \sum_{\pi_{-i}} b(\pi_{-i})\cdot Q_{\mathcal{M}(\pi_{-i})}(S, a_i),
	\]
	then it is equivalent to the QMDP approach \textbf{as Equation~(\ref{eq:qmdp})}.
	\item If $0<n<\infty, m=\infty$ and $\textsc{Eval}_i(\cdot) = any$, then it is equivalent to solving a (finite) $n$-horizon POMDP with terminal states evaluated by respective MDPs. Considerably, this will perform better then \texttt{F2}, but replanning will be needed. 
\end{enumerate}
Note that once $n =\infty$ or $m = \infty$, it does not matter which $\textsc{Eval}_i$ is adopted, as the backup operators in the first two layers are both $\gamma$-contractions which eventually lead to unique fixed points.
The proof is postponed to Appendix~\ref{apd:converge}.
Although one can resort to off-the-shelf MDP solvers, formulations involving optimally solving MDPs will soon become computationally infeasible as the number of agents grows.
The framework then guides one to additional scalable implementations,
\begin{enumerate}[label=\texttt{F\arabic*.}, start=5]
	\item $0<n < \infty$ and $0<m < \infty$, 
	then it is equivalent to solving a finite-horizon POMDP with terminal states evaluated as finite-horizon MDPs with terminal states evaluated by a heuristic.
	Chances are that belief update itself is not that costly. Then one may reallocate all the computational budget of $m$ to $n$, i.e., set $(n,m) \gets (n+m, 0)$, since the complexity of doing lookahead search is the same for the first two layers.	
\end{enumerate}
For both cases in \texttt{F5}, the choice of $\textsc{Eval}_i$ is no longer trivial.
It has to serve as an oracle that returns a heuristic value that is informative enough in a flash,
\textit{which is why we draw one's attention to contextual-RL policies in Section~\ref{sec:solutions} as well as efficient NE solvers later in Section~\ref{sec:benchmark}.}
Once such an oracle is ready, \texttt{F5} is all about improving (or sometimes, repairing) the oracle in real-time by online lookahead search.
In fact, \texttt{F5} leads to a full-width \textsc{ExpectiMax} tree search paradigm as Algorithm~\ref{alg:uts} in Appendix~\ref{apd:planner_detail},
while in the literature there is indeed some alternative framework like \textsc{Max}$^n$ tree search~\cite{stahl1993evolution, samothrakis2011fast}. However, the latter one aims at converging to NEs, which imposes much stronger assumptions on the opponents.



One should notice that in order to perform exact backup and compute $\max\{\cdot\}$, the agent has to exhaustively enumerate all joint actions over belief distributions, which is of complexity $\bigr(|\mathcal{A}_i|\times \prod_{j\neq i}|\Pi_{j}|\times \prod_{j\neq i}|\mathcal{A}_j|\bigl)^{n+m}$ and hence expensive when the problem further scales up. 
The framework is then extended to alleviate the heavy computation, but unavoidably compromises the accuracy, by the following two implementations.


\begin{enumerate}[label=\texttt{F\arabic*.}, start=6]
	\item Sample actions over belief distributions and use sample mean to approximate the exact backup.
	\item Use bandit-based lookahead search to approximate $\max\{\cdot\}$.
\end{enumerate} 
For bandit-based lookahead search to work, the planner has to utilize a node selection function that well balances exploration and exploitation, to eventually minimize accumulated regret~\cite{browne2012survey}.
Besides the most widely used UCT formula~\cite{kocsis2006bandit}, it is also proposed to use a certain prior policy to guide the selection, resulting the pUCT formula~\cite{rosin2011multi, silver2017mastering, schrittwieser2020mastering, schwartz2023bayes, schwartz2023combining}.
%
%
Note a the prior policy is usually not easy to acquire, which leads to the following discussion
on how to make use of NE strategies, to extract such priors as well as value estimates.
%
An ultimate version integrating \texttt{F6} and \texttt{F7} leads to an opponent-aware MCTS  planner as Algorithm~\ref{alg:mcts} in Appendix~\ref{apd:planner_detail}.
It resembles the POMCP implementation~\cite{silver2010monte},
while we avoid using random Monte-Carlo rollouts as value estimates, since they usually cannot backpropagate useful information when the problem is goal-conditioned or the rewards are sparse.
\textit{With the help of efficient NE solvers, we scale the implementation up to suit 50-agent long-horizon planning problems, which to the best of our knowledge is the first opponent-modelling planner of this capability.}

To conclude, this unified framework draws a spectrum of planners.
On top of it, one can predict the performance of a planner devised within this framework, as the deeper (and more deliberately) it searches, the better it performs.
For example,
with the same choice of $\textsc{Eval}_i$, one can easily predict a performance ranking among them:
$\texttt{F1} \succcurlyeq \texttt{F4} \succcurlyeq \texttt{F2} \sim \texttt{F5} \succcurlyeq \texttt{F6} \sim \texttt{F7}$.
One can later see such a trend as expected in the experimental results in Table~\ref{tab:performance}.





\begin{table*}[!ht]
\footnotesize
\caption{Detailed description of the implemented planners under the unified framework.}
\vspace{-2mm}
\centering
\begin{tabular}{@{}c|cccccccc@{}}
\toprule
\textbf{Planners}    & \textbf{n}   & \textbf{m} & $\mathbf\beta$ & \textbf{Collision penalty} & \textbf{\textsc{Eval}$_i$} & \textbf{Backup}     & \textbf{Lookahead} & \textbf{Replanning} \\ \midrule
$A^*$-Agent          &              &            &                &                            &                                             &                     &                    &                          \\
Safe-Agent           &              & 1          &                & $\infty$                   & single-agent $A^*$                          & exact               & full-width         & \checkmark                \\
EnhancedSafe-Agent   &              & 1          & 1 $\to$ 0      & $\infty$                   & single-agent $A^*$                          & exact               & full-width         & \checkmark                \\ \midrule
MDPAgentFixedBelief  & 0            & $\infty$   &                & $< \infty$                 &                                          &    exact           &                    &                          \\
MDPAgentUpdateBelief & 0            & $\infty$   &                & $< \infty$                 &                                          &    exact            &                    & \checkmark                \\ \midrule
RLAgentFixedBelief   & $\infty$     & 0          &                & $< \infty$                 &                                &                     &                    &                          \\
RLAgentUpdateBelief  & $\infty$     & 0          &                & $< \infty$                 &                                &                     &                    &                          \\
UniformTSAgentRL     & $(0,\infty)$ &            &                & $< \infty$                 & contextual-RL                               & exact               & full-width         & \checkmark                \\ \midrule
CBSAgentFixedBelief  & 0            & 0          &                & $< \infty$                 & NE by EECBS                                 &                     &                    &                          \\
CBSAgentUpdateBelief & 0            & 0          &                & $< \infty$                 & NE by EECBS                                 &                     &                    & \checkmark                \\
UniformTSAgentCBS    & $(0,\infty)$ &            &                & $< \infty$                 & NE by EECBS                                 & \{exact, sampling\} & full-width         & \checkmark                \\
MCTSAgentCBSuct      &              & 0          &                & $< \infty$                 & NE by EECBS                                 & sampling            & bandit-based       & \checkmark                \\
MCTSAgentCBSpuct     &              & 0          &                & $< \infty$                 & NE by EECBS                                 & sampling            & bandit-based       & \checkmark                \\ \bottomrule
\end{tabular}
\label{tab:all_planners}
\end{table*}

\section{Experiments}

\label{sec:benchmark}
\subsection{Benchmark}

\begin{table*}[t]
\footnotesize
\caption{Statistics in each scenario. The numbers in the ``lower bounds'' row are means (standard deviations).}
\vspace{-2mm}
\centering
\begin{tabular}{@{}r|cc|cc|cc|cc|cc@{}}
\toprule
                  & \multicolumn{2}{c|}{\textbf{Small2a (8x8 map)}}                          & \multicolumn{2}{c|}{\textbf{Square2a (12x12 map)}}                       & \multicolumn{2}{c|}{\textbf{Square4a (12x12 map)}}                       & \multicolumn{2}{c|}{\textbf{Medium20a (18x18 map)}}                         & \multicolumn{2}{c}{\textbf{Large50a (32x32 map)}}                          \\ \midrule
                  & \begin{tabular}[c]{@{}c@{}}Rational\\ Malicious\end{tabular} & Self       & \begin{tabular}[c]{@{}c@{}}Rational\\ Malicious\end{tabular} & Self       & \begin{tabular}[c]{@{}c@{}}Rational\\ Malicious\end{tabular} & Self       & \begin{tabular}[c]{@{}c@{}}Rational\\ Malicious\end{tabular}  & Self         & \begin{tabular}[c]{@{}c@{}}Rational\\ Malicious\end{tabular}  & Self        \\ \midrule
\#(states)        & \multicolumn{2}{c|}{930}                                                  & \multicolumn{2}{c|}{7310}                                                 & \multicolumn{2}{c|}{$5.47\times 10^7$}                           & \multicolumn{2}{c|}{$6.44\times 10^{46}$}                             & \multicolumn{2}{c}{$4.62\times 10^{145}$}                            \\
\#(joint-actions) & \multicolumn{2}{c|}{25}                                                   & \multicolumn{2}{c|}{25}                                                   & \multicolumn{2}{c|}{625}                                                  & \multicolumn{2}{c|}{$5^{20} \sim 9.54\times 10^{13}$} & \multicolumn{2}{c}{$5^{50} \sim 8.88\times 10^{34}$} \\
\#(op-types each) & \multicolumn{2}{c|}{30}                                                   & \multicolumn{2}{c|}{85}                                                   & \multicolumn{2}{c|}{85}                                                   & \multicolumn{2}{c|}{218}                                                     & \multicolumn{2}{c}{818}                                                     \\ \midrule
Lower bounds      & 4.16(2.14)                                                   & 4.26(1.54) & 6.78(3.34)                                                   & 6.79(2.54) & 6.61(3.35)                                                   & 6.85(1.77) & 11.12(5.56)                                                   & 11.32(1.19)  & 23.2(11.25)                                                   & 23.33(1.54) \\
Upper bounds      & \multicolumn{2}{c|}{32}                                                   & \multicolumn{2}{c|}{48}                                                   & \multicolumn{2}{c|}{48}                                                   & \multicolumn{2}{c|}{144}                                                      & \multicolumn{2}{c}{256}                                                     \\ \bottomrule
\end{tabular}
\label{tab:testbed}
\vspace{-1mm}
\end{table*}

\subsubsection{Multi-Agent Route Planning}
In the domain of \textit{multi-agent route planning} (MARP), a group of agents are placed in a grid world with possible obstacles. Each of them is designated a goal. 
We here name the benchmark as MAPP in order to drag one's attention away from the particular research area called \textit{multi-agent path finding} (MAPF)~\cite{stern2019multi-overview, stern2019multi}. 
In MAPF, this fleet of robots are supposed to find a set of paths to reach their goals from the initial positions without any collision among them.
Usually, the set of paths is computed upfront by an efficient centralized planner~\cite{sharon2015conflict, li2021eecbs} and then robots are forced to strictly execute them.
In other words, all robots are under the control of a central planner.
By contrast, in our MARP setting, only one agent is under our control,
without knowing others' goals and strategies.
The agent will be positively rewarded when she arrives at her goal, and will be penalized if she hits others or gets hit.
The consequences of actions are deterministic and commonly known. The environment state consists of all agents' locations and is completely observable to everyone.
We borrow this domain for the reason that it provides us with problem instances of customizable scales.
Our experiments are conducted on configurations
up to 32x32 maps with 50 agents. 
Compared with contemporary work~\cite{carmel1998explore, albrecht2015game, eck2020scalable},
This benchmark is challenging for two reasons: (1)~it is no longer repeated games, instead, involves a large number of states and joint-actions, as shown in Table~\ref{tab:testbed}; (2)~deliberate long-horizon planning is definitely needed, as the rewards are goal-conditioned and hence sparse.

\subsubsection{Nash Equilibrium}
An NE typically reveals a predictable future outcome. 
In this routing domain, given a set of initial positions and goals,
if a set of collision-free paths is found optimal up to certain metrics, e.g., minimizing the sum of lengths, it can serve as an NE as no agent will unitarily deviate in the sense of finding a shorter path without colliding to any others.
In general, it is an NP-hard problem~\cite{yu2013structure}.
However, efficient planners can solve it very fast in practice, by making good use of constraint propagation, e.g., conflict-based search~\cite{sharon2015conflict}, its bounded sub-optimal variant~\cite{li2021eecbs}, and even some universal planner~\cite{zhu2024computinguniversalplanspartially}.


Despite our argument that NE as a solution concept does not \textit{directly} offer much help,
an agent can still compute one by sampling hypothetical opponents from her belief, and make a smart use of it: (1) either directly execute the NE strategy expecting that the others will do the same, or (2) transform it into the $\textsc{Eval}_i$ heuristic as we mentioned, and therefore, can plug it into our tree search framework to repair it in an online manner.
\textit{A heuristic extracted from NEs is considerably more informative than random Monte Carlo rollouts.}
Detailed procedures will be revealed in Appendix~\ref{apd:exp_setting}.

For other domains, one can always appeal to MARL methods for NEs, e.g., MADDPG~\cite{lowe2017multi}, QMIX~\cite{rashid2020monotonic}, IPPO/MAPPO~\cite{yu2022the}, etc.

\subsubsection{A Special Case: Safe Agents}
Given this domain, we now present one simple-yet-powerful agent which is a myopic special case of the unified tree search framework, called \textit{safe-agents}. 
An action $a_i \in \mathcal{A}_i$ in state $S$ is unsafe if there exists another agent $j$ and an action $a_j \in \mathcal{A}_j$ that will drive agent $i$ and $j$ into collision in the successor state.
For each step, a \textit{safe-agent} first rules out all unsafe actions and takes the best one among the rest. By ``the best action'' here, we mean that the agent simply finds the action through which the shortest distance (ignoring other agents) to her goal can be minimized. 
In fact, one can easily construct a counter-example to get a \textit{safe-agent} stuck at an intermediate position forever.
For example, if one of the opponents stops right in the way of a \textit{safe-agent}'s only shortest path to her goal, then this \textit{safe-agent} would rather stop forever than move to a nearby safe position since that leads to a longer path.
We can therefore make \textit{safe-agents} slightly cleverer by incorporating domain-specific knowledge: no matter how random an opponent is, she will no longer move once she reaches her goal.
This leads to the \textit{enhanced safe-agents}, who will simply treat a nearby agent that has stayed still for a sufficiently long time as an obstacle, and replan a shortest path to get around.

In the language of our unified framework, \textit{safe-agents} are induced by ignoring belief update, doing one-depth full-width lookahead search, using constraints to rule out unsafe actions, and evaluating the leaf nodes by values of the onward shortest paths.
The \textit{enhanced} version is further obtained by descending the belief temperature $\beta$ from 1 to 0 when the aforementioned situations are detected, making the belief distribution a hard-max one. In this sense, these two \textit{safe-agent} variants are both online improved versions of $A^*$ agents (ignoring opponents) by one depth of tree search.

\subsubsection{Setup}
We now instantiate each theoretic module to the corresponding domain-specific implementation.

\textit{Belief Initialization and Update.}
Since the specific goal of each opponent is not revealed to the modelling agent, the modelling agent therefore associates each of the others a \textit{uniform} initial belief over all empty cells.
She also assumes everyone else is a goal-directed agent up to some randomness. More precisely, for each opponent $j$, for each empty cell $g_k$ in the map layout, a plan will be hypothesized as a shortest path tree from all possible positions to $g_k$, denoted as $\pi_j^k$ and $\mathbb{P}[\pi_j^k] = 1 / \#(empty\ cells)$.
 The modelling agent $i$ assumes agent $j$ will play $\pi_j^k$ with probability $(1-\epsilon)\cdot~\mathbb{P}[\pi_j^k]$, and go to a random adjacent position with probability $\epsilon$. 
 As for belief update, we simply use Equation (\ref{eq:bayes}) with $\beta=1$ except for the \textit{enhanced safe-agents}. Intuitively, when $\epsilon \to 0$, the term $\pi_j^k(a_j|s)\cdot~b_t(\pi_j^k)$ will be zero if agent $j$ does not follow the course of shortest-path actions towards $g_k$.
 Therefore, the belief update,
 to some extent, actually makes soft inference about opponents' true goals.

\textit{Solvers.}
We use 
\textit{mdptoolbox}\footnote{https://pymdptoolbox.readthedocs.io/en/latest/api/mdptoolbox.html} as the MDP solver, 
EECBS
\footnote{https://github.com/Jiaoyang-Li/EECBS}~\cite{li2021eecbs} as the constraint-based approximate NE solver, which returns joint-plans whose lengths are no more than a user-specified factor away from the optimum, and PPO~\cite{schulman2017proximal} implemented by \textit{stable-baseline3}\footnote{https://stable-baselines3.readthedocs.io/en/master/}~\cite{stable-baselines3} as the contextual-RL algorithm.


\textit{Hardware.}
The RL experiments are conducted on a Linux machine with one NVIDIA 2080Ti GPU. The rest are done on a Mac mini (M1 CPU, 16GB memory) with multiprocessing.

\begin{table*}[t]
\footnotesize
\caption{Detailed comparison across all planners. The numbers are means (standard deviations). Those numbers in bold fonts are the best ones in each scenario. Slots filled with ``\textbf{/}'' means the corresponding planners are computationally infeasible at that problem scale. We also copy the leftmost three columns of ``\textbf{Small2a}'' and ``\textbf{Square2a}'' to the lower table for better  comparison.} 
\centering
\begin{tabular}{@{}>{\raggedright\arraybackslash}p{1.2cm}>{\raggedright\arraybackslash}p{1cm}|>{\raggedright\arraybackslash}p{1.4cm}>{\raggedright\arraybackslash}p{1.2cm}>{\raggedright\arraybackslash}p{1.8cm}|>{\raggedright\arraybackslash}p{1.5cm}>{\raggedright\arraybackslash}p{1.5cm}>{\raggedright\arraybackslash}p{1.5cm}>{\raggedright\arraybackslash}p{1.5cm}>{\raggedright\arraybackslash}p{2cm}@{}}
\toprule
                   &           & \textbf{Astar} & \textbf{Safe}         & \textbf{EnhancedSafe} & \textbf{MDPFixed}   & \textbf{MDPUpdate}  & \textbf{RLFixed} & \textbf{RLUpdate} & \textbf{UnifTSRL} \\ \midrule
\textbf{Small2a}     & Rational  & 7.25(9.18)     & 7.33(8.90)             & 4.95(3.67)            & 6.62(7.85)          & 4.76(2.91)          & 5.92(6.91)         & 5.74(6.56)          & 5.06(4.51)             \\
\textbf{}          & Malicious & 12.33(13.11)   & 5.18(3.04)            & 5.18(3.04)            & 5.00(2.84)           & \textbf{4.96(2.79)} & 9.59(11.14)        & 9.47(11.05)         & 6.45(6.91)             \\
                   & Self      & 9.19(10.92)    & 9.96(10.89)           & 5.98(5.95)            & 9.35(10.62)         & \textbf{4.96(2.94)} & 6.44(7.54)         & 6.18(7.11)          & 5.56(5.38)             \\ \midrule
\textbf{Square2a}  & Rational  & 9.42(10.82)    & 9.60(10.85)            & 7.14(3.87)            & 8.90(9.49)           & \textbf{/}          & 9.44(10.82)        & 9.31(10.55)         & 8.19(7.84)             \\
\textbf{}          & Malicious & 17.40(18.64)    & 7.75(4.41)            & 7.75(4.41)            & \textbf{7.28(3.58)} & \textbf{/}          & 17.11(18.45)       & 17.10(18.45)         & 10.10(11.14)            \\
                   & Self      & 11.18(13.02)   & 11.60(13.12)           & 7.70(5.83)             & 10.91(12.38)        & \textbf{/}          & 12.08(14.13)       & 11.78(13.76)        & 8.92(9.08)             \\
\bottomrule
\end{tabular}

\footnotesize
\centering
\begin{tabular}{@{}>{\raggedright\arraybackslash}p{1.2cm}>{\raggedright\arraybackslash}p{1cm}|>{\raggedright\arraybackslash}p{1.4cm}>{\raggedright\arraybackslash}p{1.2cm}>{\raggedright\arraybackslash}p{1.8cm}|>{\raggedright\arraybackslash}p{1.5cm}>{\raggedright\arraybackslash}p{1.5cm}>{\raggedright\arraybackslash}p{1.5cm}>{\raggedright\arraybackslash}p{1.5cm}>{\raggedright\arraybackslash}p{2cm}@{}}
\toprule
                   &           & \textbf{Astar} & \textbf{Safe}         & \textbf{EnhancedSafe} & \textbf{CBSFixed} & \textbf{CBSUpdate} & \textbf{UnifTSCBS} & \textbf{MCTSCBSuct} & \textbf{MCTSCBSpuct} \\ \midrule
\textbf{Small2a}     & Rational  & 7.25(9.18)     & 7.33(8.90)             & 4.95(3.67)            & 6.85(8.62)        & 6.09(7.33)         & \textbf{4.74(3.19)}   & 5.48(5.07)          & 5.47(5.24)           \\
\textbf{}          & Malicious & 12.33(13.11)   & 5.18(3.04)            & 5.18(3.04)            & 12.30(13.08)       & 11.81(12.79)       & 5.32(4.08)            & 7.15(7.8)           & 6.93(7.53)           \\
                   & Self      & 9.19(10.92)    & 9.96(10.89)           & 5.98(5.95)            & 8.46(10.22)       & 6.77(8.09)         & 5.20(4.30)              & 5.48(4.21)          & 5.32(3.86)           \\ \midrule
\textbf{Square2a}  & Rational  & 9.42(10.82)    & 9.60(10.85)            & 7.14(3.87)            & 9.48(10.89)       & 8.74(9.41)         & \textbf{7.13(4.14)}   & 8.41(7.00)           & 8.27(6.95)           \\
\textbf{}          & Malicious & 17.40(18.64)    & 7.75(4.41)            & 7.75(4.41)            & 17.28(18.57)      & 16.77(18.22)       & 8.41(7.63)            & 12.77(13.89)        & 11.92(13.00)          \\
                   & Self      & 11.18(13.02)   & 11.60(13.12)           & 7.70(5.83)             & 10.94(12.71)      & 9.10(9.77)          & \textbf{7.05(3.16)}   & 7.98(5.17)          & 7.61(3.96)           \\ \midrule
\textbf{Square4a}  & Rational  & 13.40(15.92)    & 13.53(15.41)          & 8.44(6.60)             & 13.04(15.56)      & 11.50(13.72)        & \textbf{8.26(7.42)}   & 12.51(12.00)         & 11.71(11.47)          \\
\textbf{}          & Malicious & 20.89(20.43)   & $\textbf{11.43(10.68)}$ & 11.76(11.33)          & 20.06(20.03)      & 19.71(19.87)       & 13.65(15.29)          & 19.58(17.88)        & 18.54(17.71)          \\
                   & Self      & 27.42(20.78)   & 27.45(19.30)           & 11.59(10.50)           & 25.57(20.69)      & 19.14(18.92)       & \textbf{9.56(9.55)}   & 15.93(13.99)        & 14.04(12.80)           \\ \midrule
\textbf{Medium20a} & Rational  & 88.45(66.72)   & 73.92(62.80)           & \textbf{35.52(40.65)} & 86.14(67.10)       & 71.56(66.62)       & \textbf{/}            & 59.15(52.10)         & 56.04(52.59)          \\
\textbf{}          & Malicious & 96.27(64.83)   & \textbf{40.26(48.63)} & 40.62(49.03)          & 91.89(65.88)      & 90.82(65.98)       & \textbf{/}            & 64.97(56.53)        & 64.02(57.07)          \\
                   & Self      & 144.00(0.00)     & \textbf{49.27(28.73)} & 67.38(51.92)          & 144.00(0.00)        & 144.00(0.00)         & \textbf{/}            & /                   & /                     \\ \midrule
\textbf{Large50a}  & Rational  & 182.01(111.03) & 111.80(95.58)          & \textbf{74.60(78.27)}  & 187.32(108.28)    & 163.06(114.24)     & \textbf{/}            & 120.48(91.51)      & 119.26(96.18) \\
\textbf{}          & Malicious & 193.73(105.03) & \textbf{79.42(92.44)} & 79.84(92.48)          & 182.22(109.31)    & 188.22(106.83)      & \textbf{/}            & 145.88(106.18)      & 132.87(104.97)        \\
                   & Self      & 256.00(0.00)     & \textbf{76.65(5.30)}   & 209.15(82.39)         & /                 & /                  & \textbf{/}            & /                   & /                     \\ \bottomrule
\end{tabular}
\label{tab:performance}
\end{table*}

\subsection{Evaluation}

\label{sec:empirical}

In this section, we conduct a comprehensive study over \textbf{13} planners resulted from our unified framework. 
Table~\ref{tab:all_planners} shows how each implemented planner is induced by the unified framework.
For each class of ``FixedBelief'' and its opposite ``UpdateBelief'', we mean whether the planner will update the belief after observing opponents' actions.
For example, ``MDPAgentFixedBelief'' means the planner only solves the MDP induced by the initial belief and does no replanning, while ``MDPAgentFixedBelief'' means the planner will update the belief at each step and solve a new MDP. ``RL'' agents  solve the underlying POMDP and thus no replanning is needed.

As we are investigating the setting where no prior information about opponents' goals or strategies are revealed to the modelling agents, the golden standard is to run each proposed planner against all possible hypothetical opponents and their combinations, which is clearly impossible.
To get around, we prepare three representative sets of opponents against which every planner will be tested for an average performance.

%
%

\begin{enumerate}
	\item ``Rational types'': a mixture of opponents that are, to some extent, goal-directed, mainly including three types:
	\textit{ShortestPathAgent}, \textit{RandomAgent(p)} and \textit{SafeAgent}. A \textit{ShortestPathAgent} is an agent that goes towards her own goal ignoring other agents. A \textit{RandomAgent(p)} is a modified version who does random actions with probability $p$, while replans a shortest path to her goal with probability $(1-p)$. 
	
	\item ``Malicious types'': a mixture of opponents that sometimes intend to do harm to the modelling agent, what we call \textit{ChasingAgent(p)}. A \textit{ChasingAgent(p)} is an agent that plans a shortest path to the modelling agent with probability $p$, while replans a shortest path to her own goal otherwise. 
	
	\item ``Self-play'': all agents run the same planner as every one is an autonomous agent modelling others simultaneously. 
\end{enumerate}
Experiments against ``rational'' opponents aim at simulating the situations where belief modelling more or less aligns with the underlying truth, while experiments against ``malicious'' ones examine whether belief-dependent planning will be broken down as  belief modelling is significantly attacked by the chasing behavior.
Empirical results under ``self-play'' scenarios try to figure out the gap between the outcome and the potential (but unreachable) ex-post NE (of the SG realized by specific goals).

Table~\ref{tab:testbed} presents some statistics for each testing scenario. For example, the one named \textbf{Medium20a} refers to a configuration with 20 agents initially randomly spawned on maps of size 18x18 with a few obstacles.
Each opponent will be associated with 218 hypothetical policies by the modelling agent.
The scenario contains around $6.44\times 10^{46}$ states and $9.54\times 10^{13}$ joint-actions under each state.
The lower bounds for ``rational/malicious'' cases are computed as the average lengths of single-agent shortest paths (ignoring all collisions) for the modelling agent, while the lower bounds for ``self-play'' cases are computed as the average lengths per agent of the multi-agent optimal joint paths (avoiding any collision). By definition, these lower bounds are impossible to reach and hence only for reference.
The upper bounds are the maximum number of steps allowed for route planning. Also, once any collision happens, the length of the route will be set to this upper bound.


We attach the overall comparison among those 13 planners in Table \ref{tab:performance}.
The numbers are the average path lengths penalized by collisions.
Additional details about raw path lengths and collision ratios are attached in Appendix~\ref{apd:exp_setting}.
We intend to present Table~\ref{tab:performance} as a practical handbook for one to select a suitable planner, or devise their own ones in a similar way, given any problem in a certain scale, and therefore, answer the research question in Section~\ref{sec:intro}.

One can imagine Table~\ref{tab:performance} as an upper-triangular matrix, where each column corresponds to a particular planner, and each row shows the 
performances of all planners in a particular testing scenario with the best numbers in bold fonts. 
\textit{One should compare the numbers in each row, while their absolute values may not be meaningful as once a collision happens the penalty will be huge.}

\textbf{As an overview}, with the testing scenario scaling up, one can notice that the most capable planners gradually become computationally infeasible as they do deliberate computation, starting from ``{MDPUpdate}'', then ``{MDPFixed}'', finally ``{UnifTSCBS}''.
Understandably, when exhaustive enumeration of joint-actions becomes infeasible, the only options left are {MCTS} planners and \textit{safe-agent} variants. 
Despite that MCTS is in principle an anytime algorithm, if the increase of computational budget is not able to catch up with the inherent exponential complexity, it will definitely lead to compromised performance. 

\textbf{Look closer}, Table~\ref{tab:performance} indeed echoes the potential of tree search that serves as an online policy improvement operator, even while modelling a large number of opponents, e.g., by comparing vanilla RL/CBS agents against their improved tree search versions.
Additionally, MCTS planners with pUCT formula guided by NE policy priors beat those with vanilla UCT formula, but the gap is not significantly large, mainly because policy priors out of NE strategies presume others will do the same, which indeed causes some inconsistency.
One should also notice that CBS agents are never the best ones, which supports our argument that NE offers little help if one aims at computing effective strategies against priorly unknown opponents.
Contextual-RL here rather serves as a proof-of-concept example. Although online tree search further improves the learned contextual-policy, it is still challenging to invent a learning diagram that can be immediately applied to any configuration once trained.

\textbf{Unexpectedly,}  \textit{safe-agents} and its \textit{enhanced} versions perform surprisingly well, especially in cases of larger scales and against malicious opponents.
For large-scale instances, due to the limited computational budget, tree search agents are not capable of doing deliberate lookahead search to significantly outperform \textit{safe-agents}, while the latter ones easily beat all the rest in terms of fast replanning due to their simple algorithmic structure.
For situations against malicious opponents where belief modelling does not match the underlying truth, \textit{safe-agents} conservatively rule out all unsafe action and therefore guarantee a lower chance of collision. 
Alternatively, one highly interpretable angle is to see \textit{safe-agents} as \textsc{Minimax} tree search that offers a fairly good worst-case performance, especially when your opponents turn out to be harmful ones.

\textbf{For case-by-case recommendations}, Table~\ref{tab:performance} roughly shows the computational limit of each planner. One should always identify the scale of the problem at first and find the most suitable planner accordingly. For example, given a problem of $10^6$ states with three agents, we would definitely vote for two-depth full-width tree search with an NE oracle against MCTS. However, in a huge scale where value estimate oracles like RL/NE become out of reach, the only choice left is to devise rule-based planners such as \textit{safe-agents}, which at least offer certain conservative guarantees.

\textit{For other planners that may not be suitable for a multi-run evaluation, we attach a few case studies about them in Appendix~\ref{apd:case_study}.}

\section{Conclusion}
\label{sec:conclusion}

To conclude, we formalize the problem of controlling one single agent against multiple opponents that are priorly unknown. A spectrum of formulations are drawn, all of which are further unified under a tree search perspective. We underpin the investigation by offering a challenging benchmark, namely multi-agent route planning. We also show by this general framework how to customize domain-specific planners such as \textit{safe-agents}. To offer a practical handbook of proper selection of the induced planners, we have empirically tested each of them against three representative groups of opponents. One interesting observation from our experiments is that those conservative and myopic \textit{safe-agents} perform sufficiently well in most cases, especially when belief modelling does not match the underlying truth or deliberate replanning is computationally infeasible.
As for \textbf{future work}, we point out a few valuable directions: 
\begin{enumerate}
	\item How to design a formal language to describe domain knowledge that can be embedded into those general planners \cite{gao2020embedding}?
	\item How to model dependencies among agents which may potentially decompose the complexity of modelling joint-transitions and joint-utilities \cite{ma2024efficient}, and what if such dependencies are dynamic, i.e., agents may come and go?
	\item What if there are infinitely many opponents, is it justifiable to model the problem as a mean-field game \cite{lasry2007mean}?
	\item
	Other applicable domains like mechanism design~\cite{tang2017reinforcement} and even negotiation \cite{kraus1997negotiation, beer1999negotiation, jennings2001automated},
	 which also turn out to involve strategic behaviors that explore and exploit opponents during repeated interaction. 
\end{enumerate}




\begin{acks}
This work was supported in part by a generous research grant from Xiaoi Robot Technology Limited.
We thank Yuxin Pan, Chenglin Wang, Haozhe Wang, and Yangfan Wu for their valuable feedback during the revision of this paper. We also appreciate the anonymous reviewers for their insightful comments.
\end{acks}



\bibliographystyle{ACM-Reference-Format} 
\bibliography{sample}


\balance

\clearpage
\onecolumn
\appendix


\section{Theoretic Analysis of The Unified Framework}

\label{apd:converge}

For the belief-fixed lookahead search presented in Section~\ref{sec:unification}.(\ref{eq:van_bellman}),
since the belief is fixed, we can make the notations simpler.
Let $v(S) \triangleq V_i(S,b)$, and let $\mathcal{V}$ be the space of all possible such value functions,
and $\Gamma: \mathcal{V} \mapsto \mathcal{V}$ be the operator that does the job of this backup equation.
\begin{theorem}
	The backup operator $\Gamma$ in Section~\ref{sec:unification}.(\ref{eq:van_bellman}) is a $\gamma$-contraction. Mathematically, for $u,v \in \mathcal{V}$, we have
	\[
	\|\Gamma(u) - \Gamma(v) \|_\infty \leq \gamma \|u - v\|_\infty
	\]
\label{thm:van_bellman_contraction}
\end{theorem}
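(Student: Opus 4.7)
The plan is to recognise $\Gamma$ as essentially the Bellman optimality operator of the belief-induced MDP $\mathcal{M}(b)$ from Section~\ref{sec:belief_induced_mdp} with $b$ held frozen, so the classical contraction proof for MDP Bellman operators applies almost verbatim. Only a small amount of bookkeeping is needed to confirm that the compound expectation over opponent models and opponent actions does not break the argument.

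First, I would fix an arbitrary state $S$ and bound $|(\Gamma u)(S) - (\Gamma v)(S)|$ pointwise. The elementary inequality $|\max_a f(a) - \max_a g(a)| \leq \max_a |f(a) - g(a)|$ lets me commute the absolute value past the outer $\max_{a_i}$. Inside the max, the immediate reward terms $R_i(S,a)$ do not depend on $u$ or $v$ and therefore cancel, leaving only the $\gamma$-weighted difference of expected successor values. Next, for a fixed $a_i$, I would note that $b(\pi_{-i}) \cdot \pi_{-i}(a_{-i}\mid S) \cdot T(S'\mid S, a)$ is a joint probability distribution over $(\pi_{-i}, a_{-i}, S')$, so marginalising gives an honest probability distribution $p(\cdot)$ on $\mathcal{S}$. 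The remaining quantity therefore takes the form $\gamma \sum_{S'} p(S')\,[u(S') - v(S')]$, which is bounded by $\gamma \|u-v\|_\infty$ since $p$ sums to one. Taking the supremum over $S$ yields the claimed contraction.

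The main obstacle is purely notational: making sure the three layered randomisations (sampling $\pi_{-i}$ from $b$, sampling $a_{-i}$ from $\pi_{-i}(\cdot\mid S)$, and sampling $S'$ from $T(\cdot\mid S, a)$) compose into a single probability measure over $S'$ for each fixed $a_i$, so that the final bounding step is legitimate. Once that factorisation is confirmed, the contraction immediately gives a unique fixed point by Banach's theorem on the complete space $(\mathcal{V}, \|\cdot\|_\infty)$, justifying the remark in Section~\ref{sec:unification} that the specific choice of $\textsc{Eval}_i$ is irrelevant whenever $n = \infty$ or $m = \infty$.
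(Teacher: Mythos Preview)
Your proposal is correct and follows essentially the same classical Bellman-contraction argument as the paper: reduce to the belief-induced MDP $\mathcal{M}(b)$, cancel the reward term, and use that the compound weights $b(\pi_{-i})\,\pi_{-i}(a_{-i}\mid S)\,T(S'\mid S,a)$ marginalise to a probability measure on $\mathcal{S}$. The only cosmetic difference is that you handle the outer $\max_{a_i}$ via the inequality $|\max f - \max g|\le \max|f-g|$, whereas the paper splits into the two cases $\Gamma_u(S)\gtrless \Gamma_v(S)$ and picks the corresponding argmax; these are interchangeable standard tricks.
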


\begin{proof}
First, we fist convert it from the expectation notation back to the summation notation,
\[
\begin{split}
v(S)
& = \max_{a_i\in \mathcal{A}_i}\{\mathbb{E}_{\pi_{-i} \sim b, a_{-i}\in \mathcal{A}_{-i}}[R_i(s,a) +  \gamma \sum_{S'}T(S'|S,a) v(S')]\} \\
& = \max_{a_i\in \mathcal{A}_i}\{\sum_{\pi_{-i}\in b} b(\pi_{-i})\sum_{a_{-i} \in \mathcal{A}_{-i}}R_i(S,a)\pi_{-i}(a_{-i}|S) + \gamma \sum_{\pi_{-i}\in b} b(\pi_{-i})\sum_{a_{-i} \in \mathcal{A}_{-i}}T(S'|S,a)\pi_{-i}(a_{-i}|S) v(S')\} \\
& = \max_{a_i\in \mathcal{A}_i} \{R^{b}(S, a_i) + \gamma T^{b}(S'| S, a_i) v'(S) \} \\
\end{split}
\]
The last line is to simplify the equation, by our belief-induced reward and transition functions defined in Section \ref{sec:belief_induced_mdp}. With slight abuses of notations, we write $\Gamma_v(S)$ for $\Gamma(v)(S)$. In the following, we prove our target in two sub-cases,
\begin{enumerate}
\item For $S \in \mathcal{S}$, such that $\Gamma_u(S) > \Gamma_v(S)$.
We choose $a_i^* \in \arg\max_{a_i\in\mathcal{A}_i}\{R^b(S,a_i) + \gamma\sum_{S'} T^b(S'|S,a_i)u(S')\}$, then
\[
\begin{split}
|\Gamma_u(S) - \Gamma_v(S)|
& = \Gamma_u(S) - \Gamma_v(S) \\
& = R^b(S,a_i^*) + \gamma\sum_{S'} T^b(S'|S,a_i^*)u(S')
- \max_{a_i\in A_i}\{R^b(S,a_i) + \gamma\sum_{S'} T^b(S'|S,a_i)v(S')\} \\
& \leq R^b(S,a_i^*) + \gamma\sum_{S'} T^b(S'|S,a_i^*)u(S')
- [R^b(S,a_i^*) + \gamma\sum_{S'} T^b(S'|S,a_i^*)v(S')] \\
& \leq \gamma\sum_{S'} T^b(S'|S,a_i^*) [u(S') - v(S')] \\
& \leq \gamma\sum_{S'} T^b(S'|S,a_i^*) |u(S') - v(S')| \\
& \leq \gamma\sum_{S'} T^b(S'|S,a_i^*) \|u - v\|_\infty
\end{split}
\]

We then show that $\sum_{S'} T^b(S'|S,a_i) = 1$ given any $a_i \in \mathcal{A}_i$,
i.e., $T^b$ is indeed a valid (stochastic) transition function,
\[
\begin{split}
\sum_{S'\in \mathcal{S}} T^b(S'|S,a_i^*)
& = \sum_{S'\in \mathcal{S}}\sum_{\pi_{-i}\in b} b(\pi_{-i})\sum_{a_{-i} \in \mathcal{A}_{-i}}T(S'|S,a)\pi_{-i}(a_{-i}|S) \\
& = \sum_{S'\in \mathcal{S}} \sum_{a_{-i} \in \mathcal{A}_{-i}} \sum_{\pi_{-i} \in b}
T(S'|S,a)\pi_{-i}(a_{-i}|s)b_{-i}(\pi_{-i}) \\
& = \sum_{\pi_{-i} \in b}  b(\pi_{-i})
\sum_{a_{-i} \in \mathcal{A}_{-i}} \pi_{-i}(a_{-i}|s)
\sum_{S'\in \mathcal{S}} T(S'|S,a) \\
& = \sum_{\pi_{-i} \in b}  b(\pi_{-i})
\sum_{a_{-i} \in \mathcal{A}_{-i}} \pi_{-i}(a_{-i}|s) \\
& = \sum_{\pi_{-i} \in b}  b(\pi_{-i}) \\
& = 1
\end{split}
\]

\item For $S \in \mathcal{S}$, such that $\Gamma_u(S) < \Gamma_v(S)$. Similarly, we can also have
\[
|\Gamma_u(S) - \Gamma_v(S)|
= \Gamma_v(S) - \Gamma_u(S)
\leq \gamma |v(S') - u(S')|
\leq \gamma \|u - v\|_\infty
\]
\end{enumerate}
As a result, for all $S\in \mathcal{S}$, we have $|\Gamma_u(S) - \Gamma_v(S)| \leq \gamma \|u - v\|_\infty$. Hence,
$
\|\Gamma(u) - \Gamma(v)\|_\infty \leq \gamma \|u - v\|_\infty
$.
\end{proof}

By the above theorem, we can conclude that by the second-level belief-fixed lookahead search $V_i(S^{n},b^{n})$ converges to the optimal value function $v^*(S^n)$ of the induced MDP $\mathcal{M}(b^n)$, as $m$ approaches infinity.
For the first-level belief-updated lookahead search presented in Section \ref{sec:unification}.(1), we can also have a similar property.
\begin{theorem}
The backup operator in Section~\ref{sec:unification}.(\ref{eq:belief_bellman}) is a $\gamma$-contraction.
\end{theorem}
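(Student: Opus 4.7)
The plan is to view Equation~(\ref{eq:belief_bellman}) as a one-step Bellman optimality operator on the \emph{augmented} state space $\mathcal{S}\times \mathcal{B}$, and then replay the contraction argument of Theorem~\ref{thm:van_bellman_contraction} in that larger space. The augmentation is already implicit in the POMDP form of Equation~(\ref{eq:pomdp}): a point in the extended space is $(S, b)$, the extended transition kernel is $\mathcal{T}((S', b')\mid (S, b), a_i)$, and the extended expected reward is $\mathcal{R}((S, b), a_i)$, both defined in Section~\ref{sec:solutions}. The fact that makes this reduction go through cleanly is that the belief update $\xi$ is a \emph{deterministic} function of $(b, S, a)$, so all stochasticity of the successor pair $(S', b')$ given $(S, b, a_i)$ is inherited from the joint distribution of $(S', a_{-i})$ under the current belief.

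First, I would verify that $\mathcal{T}$ is a valid probability kernel on the augmented space. Because the successor belief is a deterministic image of $(S, b, a)$, summing $\mathcal{T}((S', b')\mid (S, b), a_i)$ over all reachable pairs collapses to $\sum_{\pi_{-i}} b(\pi_{-i}) \sum_{a_{-i}} \pi_{-i}(a_{-i}\mid S) \sum_{S'} T(S'\mid S, a)$, each factor of which sums to $1$. This is essentially the same normalization check carried out inside Theorem~\ref{thm:van_bellman_contraction}, just one level deeper.

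Second, I would denote the operator in question by $\tilde{\Gamma}$ and work with the sup norm over $\mathcal{S}\times \mathcal{B}$. For an arbitrary $(S, b)$, assume without loss of generality that $\tilde{\Gamma}(u)(S, b) \geq \tilde{\Gamma}(v)(S, b)$, pick $a_i^{*}$ as a maximizer for $u$, and substitute it as a (possibly sub-optimal) action inside the $\max$ defining $\tilde{\Gamma}(v)(S, b)$. The immediate-reward terms cancel, leaving $\gamma \sum_{S', b'} \mathcal{T}((S', b')\mid (S, b), a_i^{*}) \cdot [u(S', b') - v(S', b')]$, which is dominated by $\gamma\|u - v\|_\infty$ because $\mathcal{T}(\cdot \mid (S, b), a_i^{*})$ is a probability distribution. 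The reverse case is symmetric, and taking the supremum over $(S, b)$ delivers the $\gamma$-contraction claim.

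The only subtlety, rather than a real obstacle, is that $\mathcal{B} = \prod_{j\neq i}\Delta(\Pi_j)$ is a continuous space, so the sup norm ranges over an uncountable domain. This does not interfere with the argument, since the inequality above is pointwise and therefore passes to the supremum; iterates stay uniformly bounded by $R_{\max}/(1-\gamma)$, so $\tilde{\Gamma}$ acts as a strict contraction on the Banach space of bounded functions on $\mathcal{S}\times \mathcal{B}$, and Banach's fixed-point theorem yields a unique fixed point --- precisely the POMDP value function characterized by Equation~(\ref{eq:pomdp}).
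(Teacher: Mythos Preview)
Your proposal is correct and follows essentially the same route as the paper: rewrite the belief-updated backup as a Bellman optimality operator on the augmented space $\mathcal{S}\times\mathcal{B}$ with kernel $\mathcal{T}$ and reward $\mathcal{R}$, verify that $\mathcal{T}$ is a probability kernel by exploiting the determinism of $\xi$, and then replay the maximizer-substitution argument of Theorem~\ref{thm:van_bellman_contraction}. The only cosmetic difference is how the continuous belief space is handled: the paper invokes the piecewise-linear-convex structure of POMDP value functions to turn $\sup$ into $\max$, whereas you work directly with the $\sup$-norm on bounded functions and appeal to Banach's fixed-point theorem---your treatment is in fact the cleaner one, since the contraction must hold for arbitrary bounded $u,v$, not just PWLC iterates.
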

	
\begin{proof}
\[
V_i(S,b)
= \max_{a_i \in \mathcal{A}_i}\bigg\{ \mathcal{R}\Big( (S, b), a_i \Big)
+ \gamma \sum_{S',b'} \mathcal{T} \Big( (S', b') \Big| (S, b), a_i  \Big)\cdot V_i(S',b') \bigg\}
\]
Now that $V_i: \mathcal{S} \times \mathcal{B} \mapsto \mathbb{R}$ is a function over continuous variables, then $\|V_i\|_\infty \triangleq  \sup_{s,b} |V_i(S,b)|$. 
In fact, $V_i$ is piece-wise linear and convex, as it reveals the value function of the underlying POMDP \cite{smallwood1973optimal}, then ``$\sup$'' simply becomes ``$\max$'' and the rest of the proof will naturally proceed as that for Theorem \ref{thm:van_bellman_contraction}, as showing $\sum_{S',b'} \mathcal{T} ( (S', b') | (S, b), a_i  ) = 1$ for any given $a_i \in \mathcal{A}_i$ is also straightforward,
\[
\begin{split}
\sum_{S',b'} \mathcal{T} \Big( (S', b') \Big| (S, b), a_i  \Big)
& = \sum_{b' \in \mathcal{B}}\sum_{S' \in \mathcal{S}}\sum_{\pi_{-i}\in b} b(\pi_{-i})\sum_{a_{-i} \in \mathcal{A}_{-i}} T(S'|S, a) \pi_{-i}(a_{-i}|S) \\
& = \sum_{S' \in \mathcal{S}}\sum_{\pi_{-i}\in b} b(\pi_{-i})\sum_{a_{-i} \in \mathcal{A}_{-i}} T(S'|S, a) \pi_{-i}(a_{-i}|S) \\
& = 1
\end{split}
\]
The second last equality holds because the belief update process is deterministic, hence the unique successor belief, while the last equality is proving the same target as what we did in the proof of Theorem \ref{thm:van_bellman_contraction}.

\end{proof}

Therefore, the above theorem shows $V_i(S^0,b^0)$ will converge to the solution of Equation~(\ref{eq:pomdp}), as $n$ approaches infinity, i.e. the optimal value function $V_i^*(S^0,b^0)$ of the underlying POMDP.
More illustratively, it can be informally shown by Figure \ref{fig:conv},
where $\|EB\| = \gamma^m \|CB\|$, $\|AG\| = \gamma^m \|AC\|$, and $\|AD\| =\gamma^n \|AB\|$, $\|AF\| =\gamma^n \|AE\|$, $\|AH\| =\gamma^n \|AG\|$, therefore, by simple geometry,
$FD \parallel EB$ and $HF \parallel GE \parallel AB$.
Projecting our planning procedures to the diagram,
we start from C, go to E, and end up with F.
For several other alternatives,
\begin{enumerate}
	\item $C\rightarrow B \rightarrow D$ means one optimally solves the belief induced MDP first and then backup the value $n$ levels with updated beliefs. 
	\item $C\rightarrow G \rightarrow H$ means one optimally solves the finite-$(n+m)$-horizon POMDP with terminating states evaluated by $\textsc{Eval}_i$.
\end{enumerate}
Consequently, $\|FD\|$ and $\|HF\|$ are the respective distances of these two alternative solutions to that given by what we have proposed.

\begin{figure}[!ht]
	\centering
	\includegraphics[height=60mm]{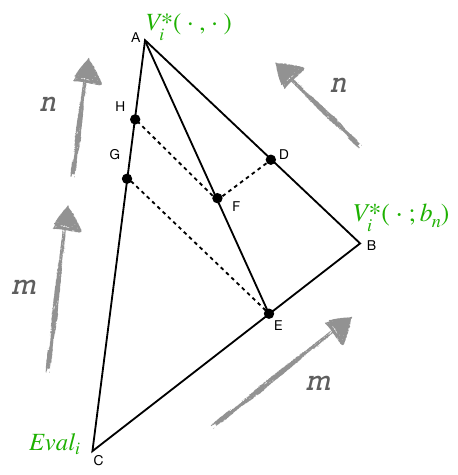}
	\caption{The convergence dynamics.}
	\label{fig:conv}
\end{figure}

%
%
%

\section{Planners in Pseudocode}
\label{apd:planner_detail}

By convention, we use $SG$ to denote an instantiated multi-agent environment, $SG.reset()$ launches a new episode and returns the initial state, and $SG.step(a_i, a_{-i})$ proceeds the environment by the given joint actions and returns the successor state. An unbounded \textbf{while} loop is used to represent a running episode, and will terminate automatically if $SG$ reaches an end state.

\begin{algorithm}[!ht]
\caption{Planning via belief-fixed MDPs}
\begin{algorithmic}[1]
%
%
\Function{Plan-Belief-Fixed}{$b^0$}
	\State $M \gets \mathcal{M}(b^0)$
	\Comment Induce an MDP
	\State $\pi_i \gets \textsc{solve}(M)$
	\State $S \gets SG.reset()$
	\While{True} \Comment Game loop
		\State $a_i \gets \pi_i(\cdot|S)$
		\Comment No replanning
		\State $S\gets SG.step(a_i, a_{-i})$
	\EndWhile
\EndFunction
\end{algorithmic}
\label{alg:mdp_fixed}
\end{algorithm}

\begin{algorithm}[!ht]
\caption{Planning via belief-updated MDPs}
\begin{algorithmic}[1]
\Function{Plan-Belief-updated}{$b^0$}
	\State $b \gets b^0$
	\State $M \gets \mathcal{M}(b)$
	\Comment Induce an MDP
	\State $\pi_i \gets \textsc{solve}(M)$
	\State $S \gets SG.reset()$
	\While{True} \Comment Game loop
		\State $a_i \gets \pi_i(\cdot|S)$
		\State $S \gets SG.step(a_i, a_{-i})$
		\State $b \gets \xi(b, S, a)$
		\Comment $a$ is a shorthand for $(a_i, a_{-i})$
		\State $M \gets \mathcal{M}(b)$
		\Comment Replanning on the revised MDP
		\State $\pi_i \gets \textsc{solve}(M)$
	\EndWhile
\EndFunction
\end{algorithmic}
\label{alg:mdp_updated}
\end{algorithm}

\begin{algorithm}[!ht]
\caption{Planning via QMDPs}
\label{alg:qmdp}
\begin{algorithmic}[1]
\Function{Plan-QMDP}{$b^0$}
	\For{each $\pi_{-i} \in b^0$}
		\State $Q_{\pi_{-i}} \gets \textsc{Solve}(\mathcal{M}(\pi_{-i}))$
		\Comment Get the Q values instead of the policy
	\EndFor
	\State $b\gets b^0$
	\State $S \gets SG.reset()$
	\While{True} \Comment Game loop
		\State $a_i \in \arg\max_{a\in \mathcal{A}_i} \sum_{\pi_{-i} \in b}Q_{\pi_{-i}}(S, a) \cdot b(\pi_{-i})$
		\Comment No need to replan
		\State $S\gets SG.step(a_i, a_{-i})$
		\State $b \gets \xi(b, S, a)$
	\EndWhile
\EndFunction
\end{algorithmic}
\end{algorithm}

\begin{algorithm}[!ht]
\caption{Planning via ContextualRL}
\label{alg:crl}
\begin{algorithmic}[1]

\Function{Plan-ContextualRL}{$b0$}
	\State $SG' \gets \textsc{WrapAsSamplingEnv}(SG)$
	\State $\pi^* \gets \textsc{AnyLearner}(SG')$
	\State $b \gets b^0$
	\State $S \gets SG.reset()$
	\While{True} \Comment Game loop
		\State $a_i \sim \pi^*(\cdot|S,b)$
		\Comment $\pi$ might be a stochastic policy
		\State $S\gets SG.step(a_i, a_{-i})$
		\State $b \gets \xi(b, S, a)$
	\EndWhile
\EndFunction
\end{algorithmic}
\end{algorithm}

We first present the pseudocode of the planners we mentioned in Section \ref{sec:solutions}, as Algorithm \ref{alg:mdp_fixed}, \ref{alg:mdp_updated}, \ref{alg:qmdp}, \ref{alg:crl}, respectively.
We did not attach the pseudocode for the POMDP planner, as it simply operates this way: one compiles the problem into a POMDP instance, gives it to a POMDP solver, and enquires the returned policy at each step without any replanning.

As illustrated in Figure \ref{fig:ts}, the general framework is implemented as an \textsc{ExpectiMax} tree.
The red triangle nodes are called ``MAX'' nodes, representing the states of the modelling agent $i$, while the orange diamond nodes are called ``EXP'' nodes, representing the hypothetical states\footnote{Also usually known as after-states.} that follow from the states given agent $i$'s committed action. The green diamond nodes are not explicitly implemented as they are just conceptual ones to show that each $a_{-i}$ is drawn probabilistically from the potential policies (or types) based on the belief. Algorithm \ref{alg:ts} shows the skeleton of how to use tree search as an online (re-)planner, and Algorithm \ref{alg:uts} shows the detailed procedure of how this \textsc{ExpectiMax} backup works.

\begin{figure}[!ht]
	\centering
	\includegraphics[height=55mm]{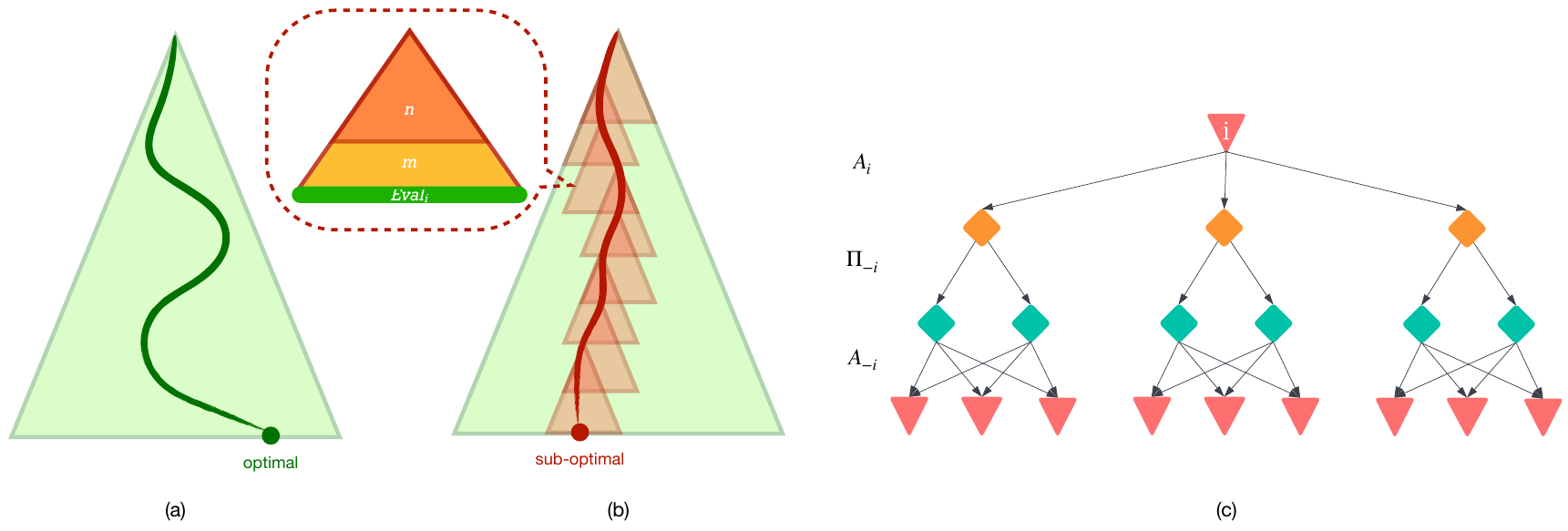}
	\caption{The exact optimal plan~(a), a potential approximated online plan with repeated replanning by layered tree search~(b), and a closer look at the tree diagram for one depth of the lookahead search~(c).}
	\label{fig:ts}
\end{figure}

\begin{algorithm}[!ht]
\caption{Planning via look-ahead tree search}
\label{alg:ts}
\begin{algorithmic}[1]

\Function{Plan-TS}{$b^0, compute\_budget$}
	\State $b \gets b^0$
	\State $S \gets SG.reset()$
	\While{True} \Comment Game loop
		\State $a_i \gets \textsc{TreeSearch.BestResponse}(S, b, compute\_budget)$
		\State $S \gets SG.step(a_i, a_{-i})$
		\State $b \gets \xi(b, S, a)$
	\EndWhile
\EndFunction
\end{algorithmic}
\label{alg:ts}
\end{algorithm}

The ultimate version of MCTS-like planner is described in Algorithm \ref{alg:mcts}.
Note that, in line 19, it has to utilize a exploration-exploitation-balanced choice function for node selection.
Given a tree node $t$, by $t.v$ we denote the accumulated return from this node state onwards, and $t.N$ the number of visits to this node.
The most widely used formula is the UCT formula \cite{kocsis2006bandit},
\begin{equation}
	t_c \in \arg\max_{t_c \in t.children} \frac{t_c.v}{t_c.N} + c\cdot \sqrt{\frac{\ln(t.N)}{t_c.N}}
\end{equation}
where $c$ is a constant controlling the weight of exploration and exploitation, with its best empirical value $\sqrt{2}$.
It is also proposed to use certain prior policies to guide the choices, resulting the pUCT formula \cite{rosin2011multi, silver2017mastering, schrittwieser2020mastering},
\begin{equation}
	t_c \in \arg\max_{t_c \in t.children} \frac{t_c.v}{t_c.N} + t_c.policy\_prior \cdot \sqrt{\frac{\ln(t.N)}{t_c.N}}\Bigl(c_1 + \ln \bigl(\frac{t.N + c_2}{c_2}  \bigr)  \Bigr)
\end{equation}
where $c_1$ and $c_2$ are two constants controlling the influence of the prior policy, with their commonly adopted empirical values $c_1=1.25$ and $c_2=19625$.
In principle, the prior policies are harder to acquire than the value estimations of the leaf nodes.
However, we here mention two ways to obtain both in practice,
\begin{enumerate}
	\item Via the approximate NE strategies computed by \textit{constraint satisfaction} solvers. One can sample the opponent types from the belief distribution for multiple runs, compute an approximate (ex-post) NE w.r.t. the sampled types at each run, and eventually obtain an average policy. The value estimate can be computed by the average utility of those approximate NE strategies. We will elaborate, in our MARP domain, how to convert CBS plans to these two components in Appendix \ref{apd:exp_setting}.
	\item Via policy predictions by the actor in any actor-critic RL algorithm.
	Any actor-critic RL algorithm like PPO \cite{schulman2017proximal} has an actor network to output certain logits, and a value network to predict a rough value of a given state.
	The action sampling distribution parameterized by the logits can serve as the policy prior, while the value prediction can directly be the desired value estimate.
\end{enumerate}

\begin{algorithm}[htpb]
\caption{Opponent-Modelling Uniform Tree Search (Exact Backup)}
\label{alg:uts}
\begin{algorithmic}[1]
\Function{BestResponse}{$S, b, total\_depth$}
	\State Initialize the root node as 
	\[
	t_{root} \gets
	(type=\texttt{`max'}, state=S, height=0, belief=b, a_{prev}=null, children=[], reward = 0, v=null)
	\]
	\For{$a_i \in \mathcal{A}_i$}
		\State $t_c \gets \textsc{NewChildNode}(\texttt{`exp'}, t_{root}, a_i)$
		\State $t_{root}.children.append(t_c)$
	\EndFor
	\State \textsc{MaxVal}$(t_{root}, 0, total\_depth)$
	\State $best\_child \gets \arg\max_{t_c \in t_{root}.chilren} t_c.v$
	\State \Return $best\_child.a_{prev}$
	
\EndFunction

%

\State
\Function{NewChildNode}{$type, parent, a$}
	\If{$type == \texttt{`exp'}$}
		\State $S' \gets parent.state$
		\State $h' \gets parent.height$
		\State $b' \gets parent.belief$
		\State
		$t_{new} \gets (type=\texttt{`exp'},
			   state=S', height=h', belief=b', a_{prev}=a,
			   children=[], reward=0, v=null)$
		
	\ElsIf{$type == \texttt{`max'}$}
		\State $a' \gets concatenate(parent.a_{prev}, a)$
		\Comment To compose a joint action
		\State $S', r' \gets transit\_and\_reward(parent.state, a')$
		\State $h' \gets parent.height + 1$
		\State $b' \gets belief\_update(parent.belief, parent.state, a')$
		\State
		$t_{new} \gets (type=\texttt{`max'},
			   state=S', height=h', belief=b', a_{prev}=a',
			   children=[], reward=r', v=null)$
	
	\EndIf
	\State $parent.children.append(t_{new})$
	\State \Return $t_{new}$
\EndFunction

\State
\Function{ExpVal}{$exp\_node, height, total\_depth$}
  	\For{$a_{-i} \in \mathcal{A}_{-i}$}
  		\State $child_{-i} \gets \textsc{NewChildNode}(\texttt{`max'}, exp\_node, a_{-i})$
  		\State $exp\_node.children.append(child_{-i})$
  	\EndFor
  	\State $exp\_node.v \gets$
  	$\sum_{a_{-i} \in \mathcal{A}_{-i}}
  			\sum_{\pi_{-i} \in b}b(\pi_{-i}) \pi_{-i}(a_{-i}|exp\_node.state)\cdot
  			[child_{-i}.reward + \gamma \cdot \textsc{MaxVal}(child_{-i}, height + 1, total\_depth)]$
  	\State \Return $t.v$
\EndFunction

\State
\Function{MaxVal}{$max\_node, height, total\_depth$}
  	\If{$height == total\_depth$}
  		\State $max\_node.v \gets \textsc{Eval}_i(max\_node.state, max\_node.belief)$
  	\Else
  		\For{$a_{i} \in \mathcal{A}_{i}$}
  			\State $child_{i} \gets \textsc{NewChildNode}(\texttt{`max'}, max\_node, a_i)$
  			\State $max\_node.children.append(child_{i})$
  		\EndFor
  		\State $max\_node.v \gets \max_{t_c\in t.children} \textsc{ExpVal}(max\_node, height, total\_depth)$
  	\EndIf
  	\State \Return $max\_node.v$
\EndFunction
\end{algorithmic}
\label{alg:uts}
\end{algorithm}

\begin{algorithm}[htbp]
\caption{Opponent-Modelling Monte Carlo Tree Search }
\label{alg:mcts}
\begin{algorithmic}[1]
\Function{BestResponse}{$S, b, time\_limit$}
	\State Initialize the root node as 
	\[
	t_{root} \gets
	(type=\texttt{`max'}, state=S, height=0, belief=b, a_{prev}=null, children=[], reward = 0, v=null, N=0)
	\]
	\While{not exceeding $time\_limit$}
		\State $t_{candidate} \gets \textsc{Select}(t_{root})$
		\State $t_{new} \gets \textsc{Expand}(t_{candidate})$
		\State $\textsc{Evaluate}(t_{new})$
		\Comment The node to evaluate must be a MAX node
		\State $\textsc{Backup}(t_{new})$
		\Comment Backup the value given by $\textsc{Eval}_i(t_{new}.state, t_{new}.belief)$
	\EndWhile
	\State $best\_child \gets \arg\max_{t_c \in t_{root}.children} t_c.N$
	\Comment select the action according to the Categorial distribution parameterized by $N$'s
	\State \Return $best\_child.a_{prev}$
	
\EndFunction

%

%
%

\State
\Function{Select}{$node$}
	\While{ True}
		\If{$node.type == \texttt{`max'}$}
			\If{$node$ is not fully expanded}
				\State \Return $node$ 
			\Else
				\State $node \gets \textsc{EEBalancedChoice}(node)$
				\Comment Balance exploration and exploitation
			\EndIf
		\ElsIf{$node.type == \texttt{`exp'}$}
			\State Sample $\pi_{-i} \sim node.belief$
				   for enough times to obtain a mean policy $\tilde\pi_{-i}$
			\State Sample $a_{-i} \sim \tilde\pi_{-i}(\cdot|node.state)$
			\If{$a_{-i}$ is not tried yet}
			\Comment Encounter a new MAX node that is not evaluated
				\State $child \gets \textsc{NewChildNode}(\texttt{`max'}, node, a_{-i})$
				\Comment Additionally set $child.N \gets 0$
				\State \Return $child$
			\Else
				\State $node \gets node.children[-i]$
				\Comment ``[-i]'' means the index corresponding to $a_{-i}$
			\EndIf
		\EndIf
	\EndWhile
	\State \Return node
\EndFunction

\State
\Function{Expand}{$node$}
	\If{$node$ is not yet evaluated}
	\Comment This can be done by checking whether $node.v$ is still $null$
		\State \Return $node$ 
	\EndIf
	\State $a_i \gets$ an untried but available action from $\mathcal{A}_i(node.State)$
	\State $child \gets \textsc{NewChildNode}(\texttt{`exp'}, node, a_i)$
	\State Sample $\pi_{-i} \sim child.belief$
				   for enough times to obtain a mean policy $\tilde\pi_{-i}$
	\State Sample $a_{-i} \sim \tilde\pi_{-i}(\cdot|child.state)$
	\State \Return $\textsc{NewChildNode}(\texttt{`max'}, child, a_{-i})$ 
\EndFunction


\State
\Function{Backup}{$node$}
	\State $G \gets node.v$
	\While{$node$ is not $null$}
		\If{$node.type == \texttt{`max'}$}
		\Comment Only compute discounted rewards at MAX nodes
			\State $G \gets node.reward + \gamma \cdot G$
		\EndIf
		\State $node \gets node.parent$
		\State $node.v \gets node.v + G$
		\State $node.N \gets node.N + 1$
		
	\EndWhile
\EndFunction
\end{algorithmic}
\label{alg:mcts}
\end{algorithm}

\clearpage

\section{More Experimental Details}
\label{apd:exp_setting}

\subsection{Convert CBS plans to NE Strategies}

Given a grid map and a set of initial positions and goals,
MAPF solvers, such as CBS~\cite{sharon2015conflict} and EECBS~\cite{li2021eecbs},
compute a set of collision-free paths. As we mentioned, if this set of collision-free paths is optimal up to certain metrics, e.g., minimizing the sum of lengths, it serves as an NE as no agent will deviate in the sense of finding a shorter path without colliding to any others. We here use EECBS as it supports users to specify an error bound $\epsilon$ and returns a bounded sub-optimal solution of total length no more than $(1+\epsilon)$ times the optimum. By merely sacrificing a bounded amount of solution quality, EECBS can speed up the solving process drastically, e.g., only needs an amount of time of the order of 10ms to solve instances of 32x32 maps with 50 agents. In our experiments, $\epsilon = 0.2$.

By such a solver as an oracle that can compute a sample NE very fast in real-time, we are then able to extract value estimates and policy priors for the usage in tree search algorithms.
Given a tree node $(S,b)$, where $S$ represent the current locations of all agents and $b$ is a distribution over all possible goals of the opponents that is inferred by the modelling agent, the modelling agent will go through the following procedure for multiple rounds,
\begin{enumerate}
	\item Samples a set of opponents' goals from $b$ and calls EECBS to compute a set of collision-free path from the current locations to the sampled goals.
	\item Extract her own path, which is a sequence of actions leading to her goal without colliding to others.
	\item Suppose the path is of length $l$ and the first action is $a^0$, the value  will be estimated as $\gamma^l \times R_i(goal)$ and $e_{a^0}$ will be the policy prior, where $e_i$ is the unit vector with the $i$-th element being 1.
\end{enumerate}
Finally, the value estimate and policy prior for this node $(S,b)$ will be computed as the mean of these values and policy priors collected above.

\subsection{Contextual-RL}
As mentioned in Section~\ref{sec:model}, Equation~(\ref{eq:pomdp}) also leads to potential (contextual-)RL solutions~\cite{benjamins2021carl}.
To this end, one needs to cast the given multi-agent stochastic game as a single-agent learning environment from the perspective of the modelling agent, by (i) first initializing each episode by sampling opponent strategies according to the initial belief, and (ii) then proceeding the environment by the given action of the modelling agent and the sampled actions of the opponents, updating the belief, and returning it to the modelling agent.

Consequently, the most challenging part lies in how to efficiently train such a policy that converges to the desired optimum.
We wrap our environment as a gym-like one, and then use PPO \cite{schulman2017proximal} implemented by \textit{stable-baseline3}\footnote{https://stable-baselines3.readthedocs.io/en/master/} \cite{stable-baselines3} to train our modelling agent. We also tried other alternatives like DQN and A2C, but they do not end up with acceptably good returns.
Figure \ref{fig:crl} shows sample experiments of the training phase, for the two configurations ``Small\_2a'' and ``Square\_2a'', respectively.
In both figures, there is clearly an intermediate plateau before convergence. It is usually the case that in a certain early phase the modelling agent does find a feasible plan to reach the goal without any collision, and in the later phase she eventually manages to find a much shorter plan, hence a much improved return.

\begin{figure}[!ht]
	\centering
	\includegraphics[height=80mm]{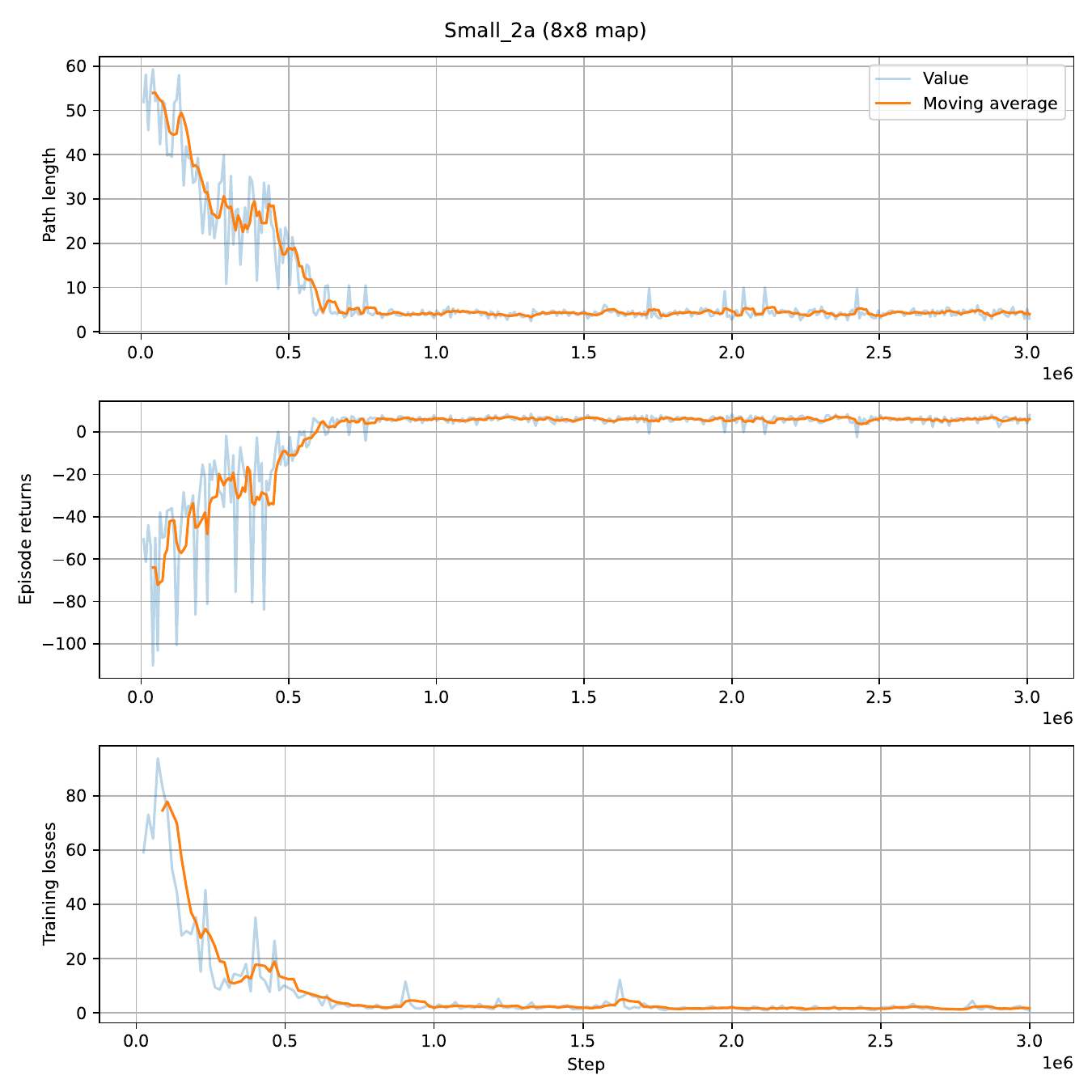}
	\includegraphics[height=80mm]{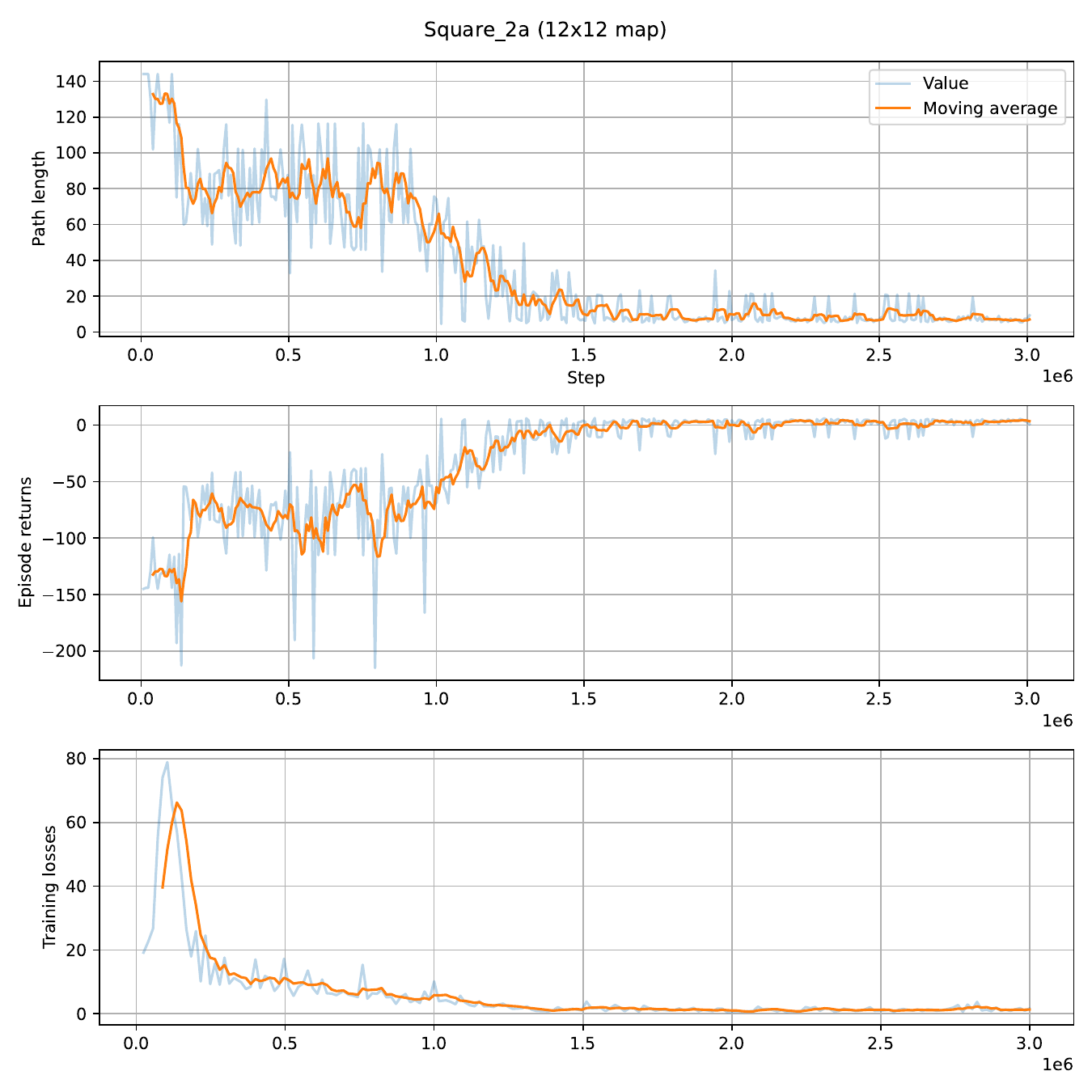}
	\caption{Statistics of the RL training samples.}
	\label{fig:crl}
\end{figure}

\subsection{Detailed Results Additional to Table~\ref{tab:performance}}

Table~\ref{tab:params} shows the detailed parameters for the planners tested in the experiments shown in Table~\ref{tab:performance}.
\begin{enumerate}
	\item ``\#(runs)'' means how many rounds we test each planner to calculate an average.
	\item ``$\epsilon$'' means the ration of randomness associated to the opponents that is assumed by the modelling agent.
	\item ``depth'' means the depth of lookahead search in the corresponding full-width tree search planner.
	\item ``eval\_samples'' means the number of calls to EECBS while evaluating a tree node.
	\item ``backup\_samples'' means the number of samples to perform sampling-based backup.
	\item ``max\_iter'' means the number of simulations performed by the corresponding MCTS planner.
	\item ``select\_samples'' means the number of samples performed at each ``EXP'' node.
\end{enumerate}

\begin{table}[!htpb]
\begin{tabular}{@{}l|lllllll@{}}
\toprule
maps      & \#(runs) & $\epsilon$ & depth & eval\_samples & backup\_samples & max\_iter & select\_samples \\ \midrule
Small2a     & 500      & 7E-04                   & 2     & 10           & exact              & 30      & 50             \\
Square2a  & 1000     & 2E-04                   & 2     & 10           & exact              & 50      & 50             \\
Square4a  & 1500     & 2E-04                   & 1     & 5            & 10             & 60      & 50             \\
Medium20a & 1000     & 8E-05                   & /     & 5            & /              & 80      & 80             \\
Large50a    & 500      & 2E-05                   & /     & 2            & /              & 100     & 125   \\ \bottomrule       
\end{tabular}
\caption{Detailed Parameters for Table \ref{tab:performance}.}
\label{tab:params}
\end{table}

In table~\ref{tab:performance}, we have shown the path lengths penalized by collisions.
Here by Figure~\ref{fig:detail_samll}~-~\ref{fig:detail_large}, we show the raw path lengths as well as collision ratios.
As one can see,
\begin{enumerate}
	\item \textit{Safe-agents} and their \textit{Enhanced} versions lead to possibly longer raw paths but lower chance of collisions, as they may easily get stuck but can avoid most of the collisions.
	\item Compared to each planner by a vanilla oracle, the improved version by tree search usually leads to slightly longer paths but significantly lower collision ratios.
	\item For \textbf{malicious} opponents, it is sometimes better if the modelling agent sticks to the initial uniform belief and does not update it, as in this case belief modelling is ``severely attacked'' by their chasing behavior. 
\end{enumerate}

\begin{figure}[!ht]
	\flushleft
	\includegraphics[height=35mm]{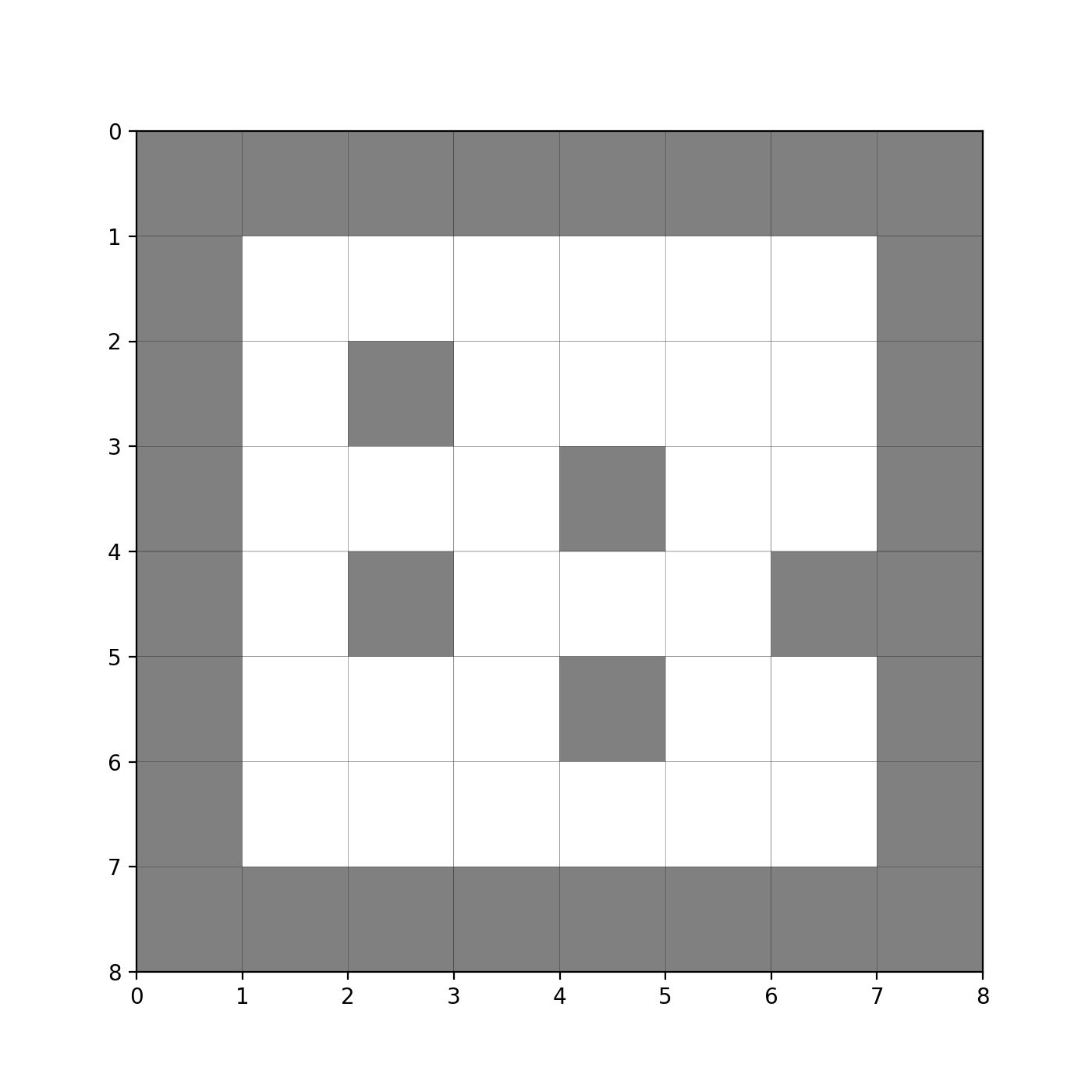}
	\hspace{-3mm}
	\includegraphics[height=35mm]{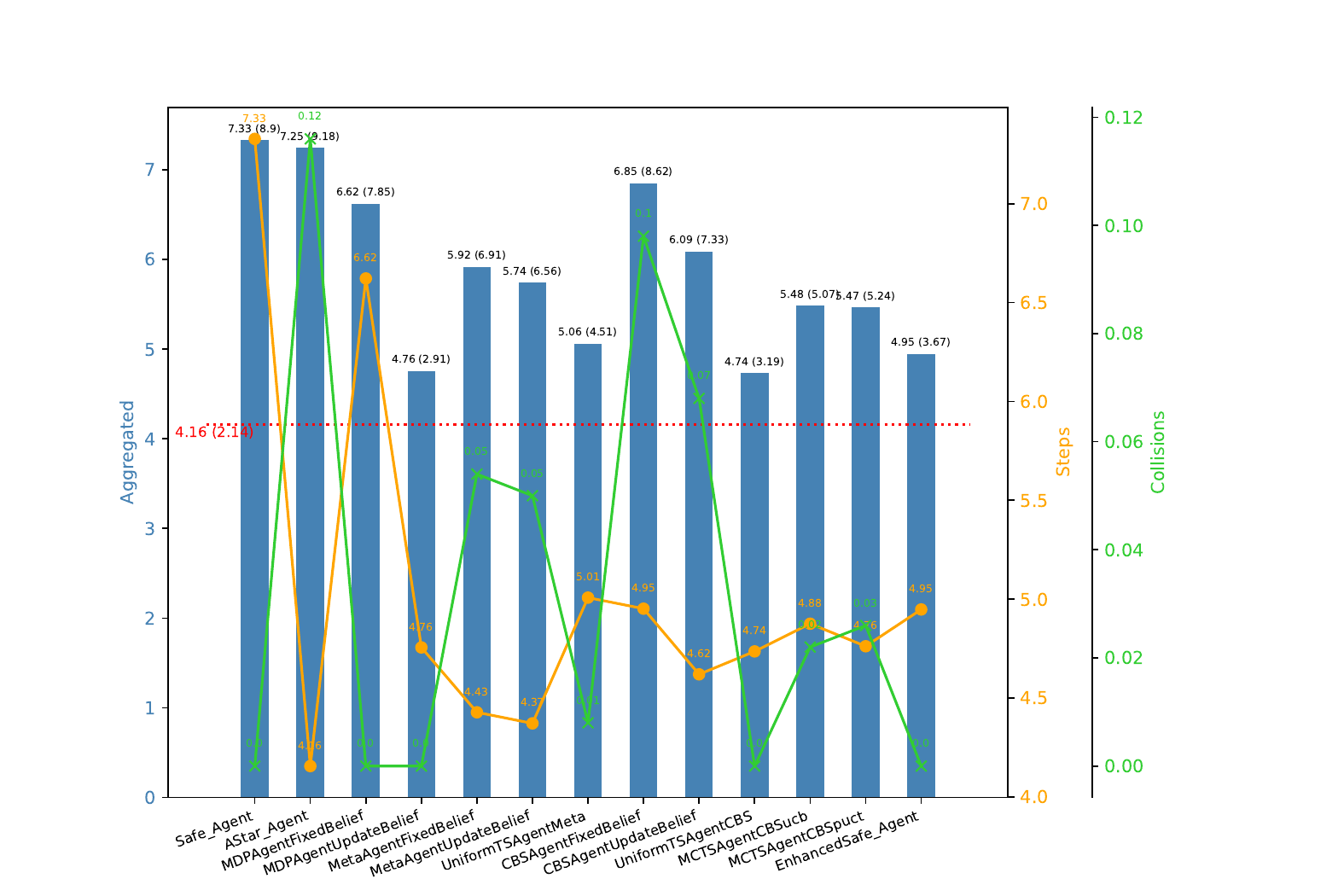}
	\hspace{-8mm}
	\includegraphics[height=35mm]{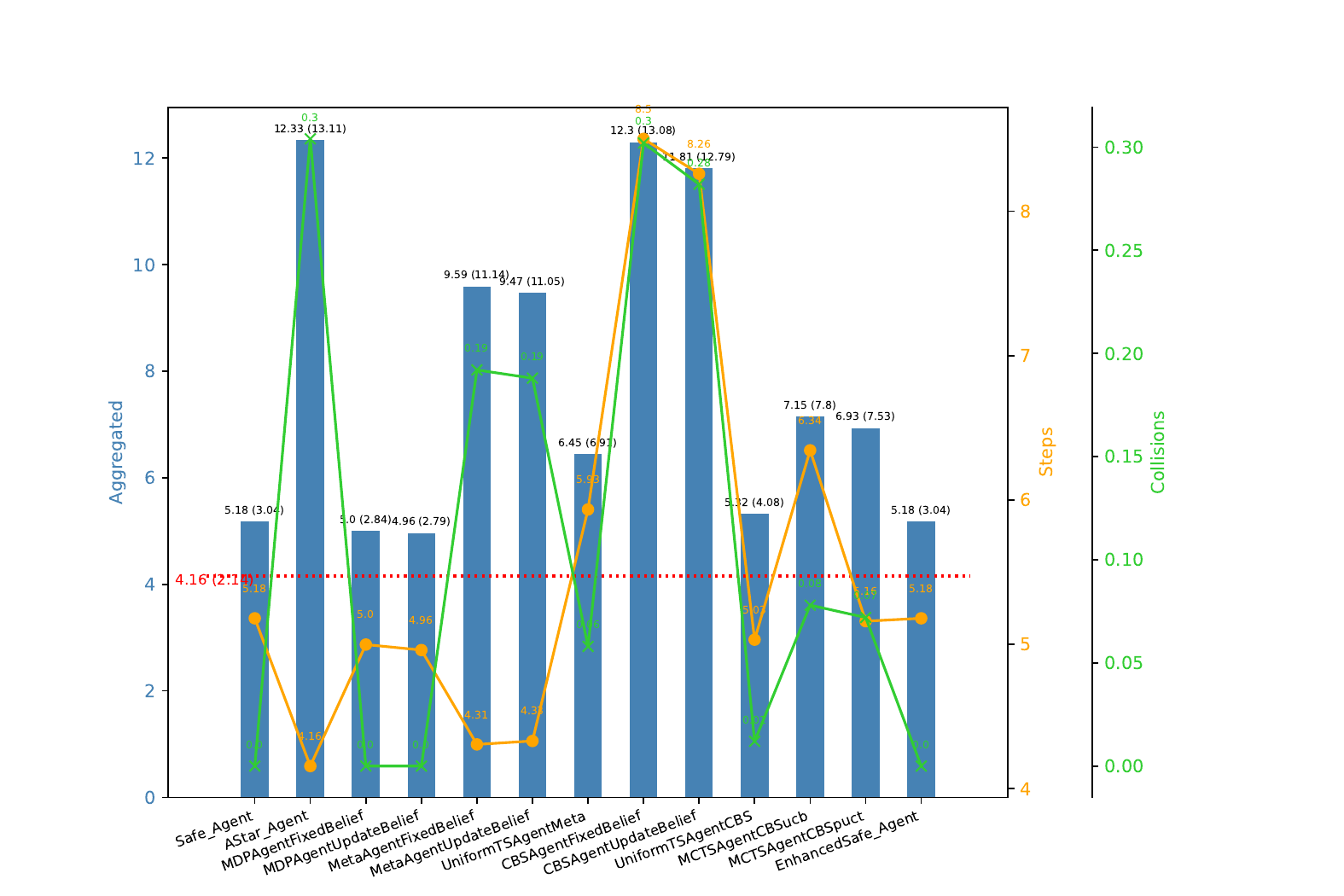}
	\hspace{-8mm}
	\includegraphics[height=35mm]{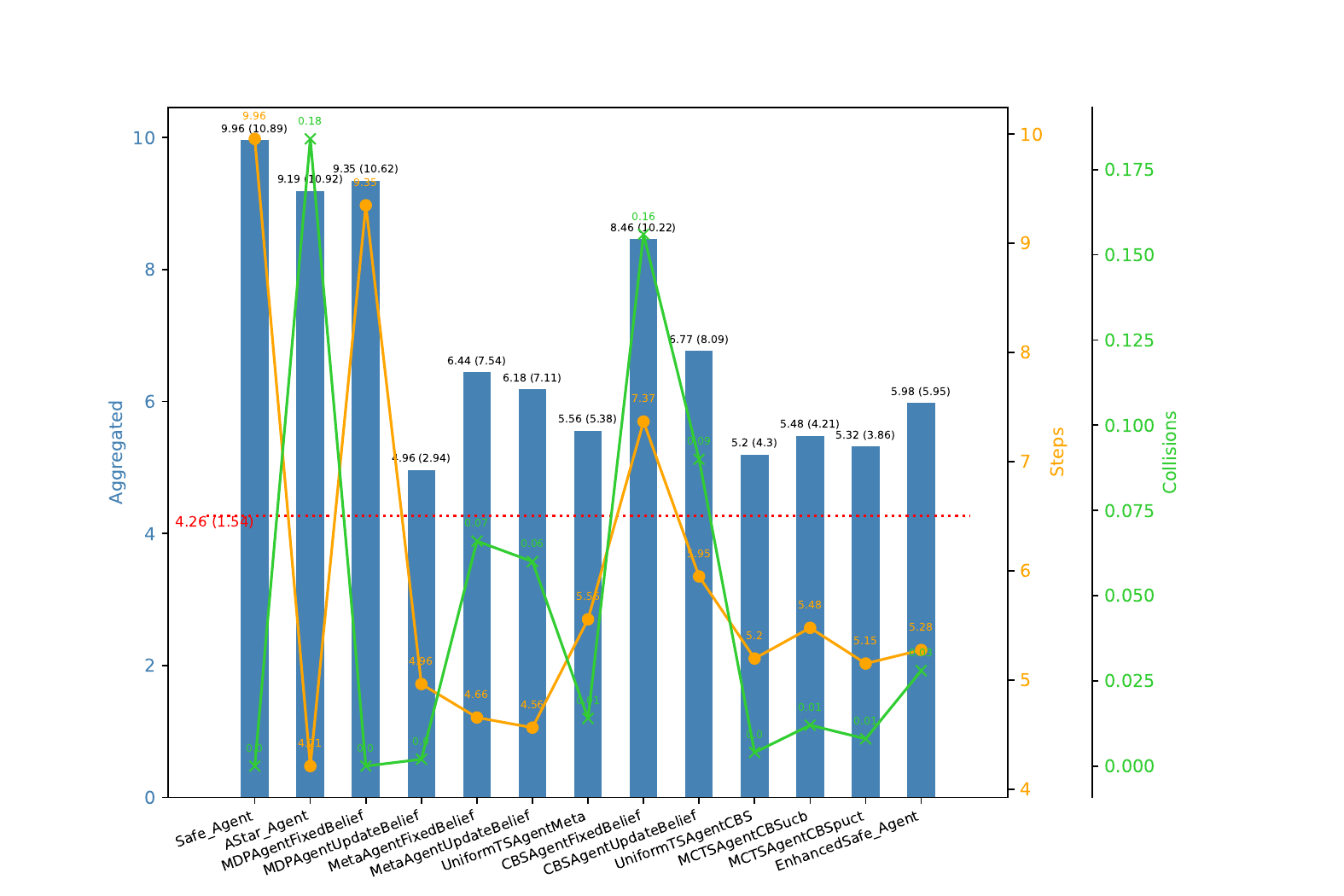}
	\caption{Detailed experiments for ``Small2a'' configurations.}
	\label{fig:detail_samll}
\end{figure}

\begin{figure}[!ht]
	\flushleft
	\includegraphics[height=35mm]{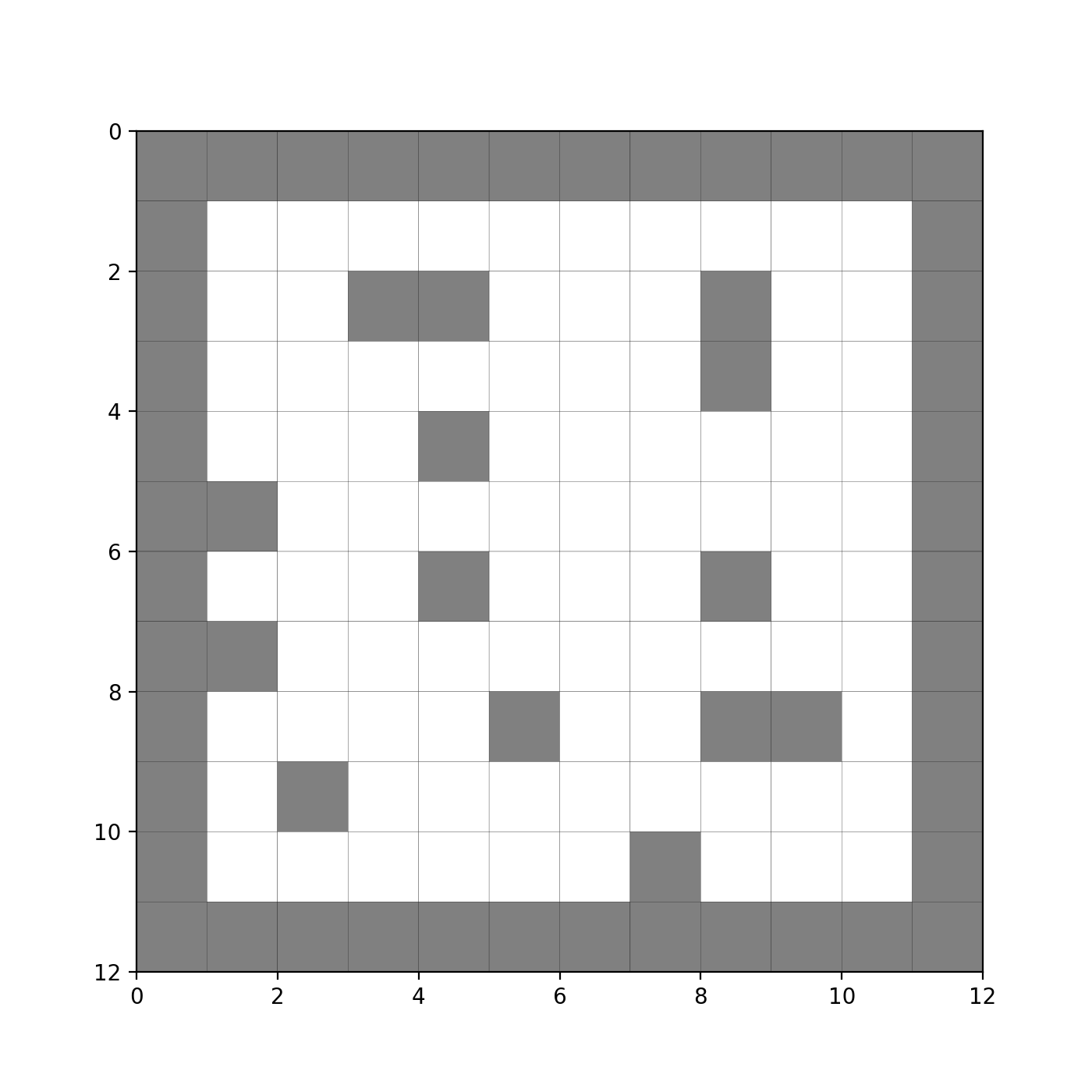}
	\hspace{-3mm}
	\includegraphics[height=35mm]{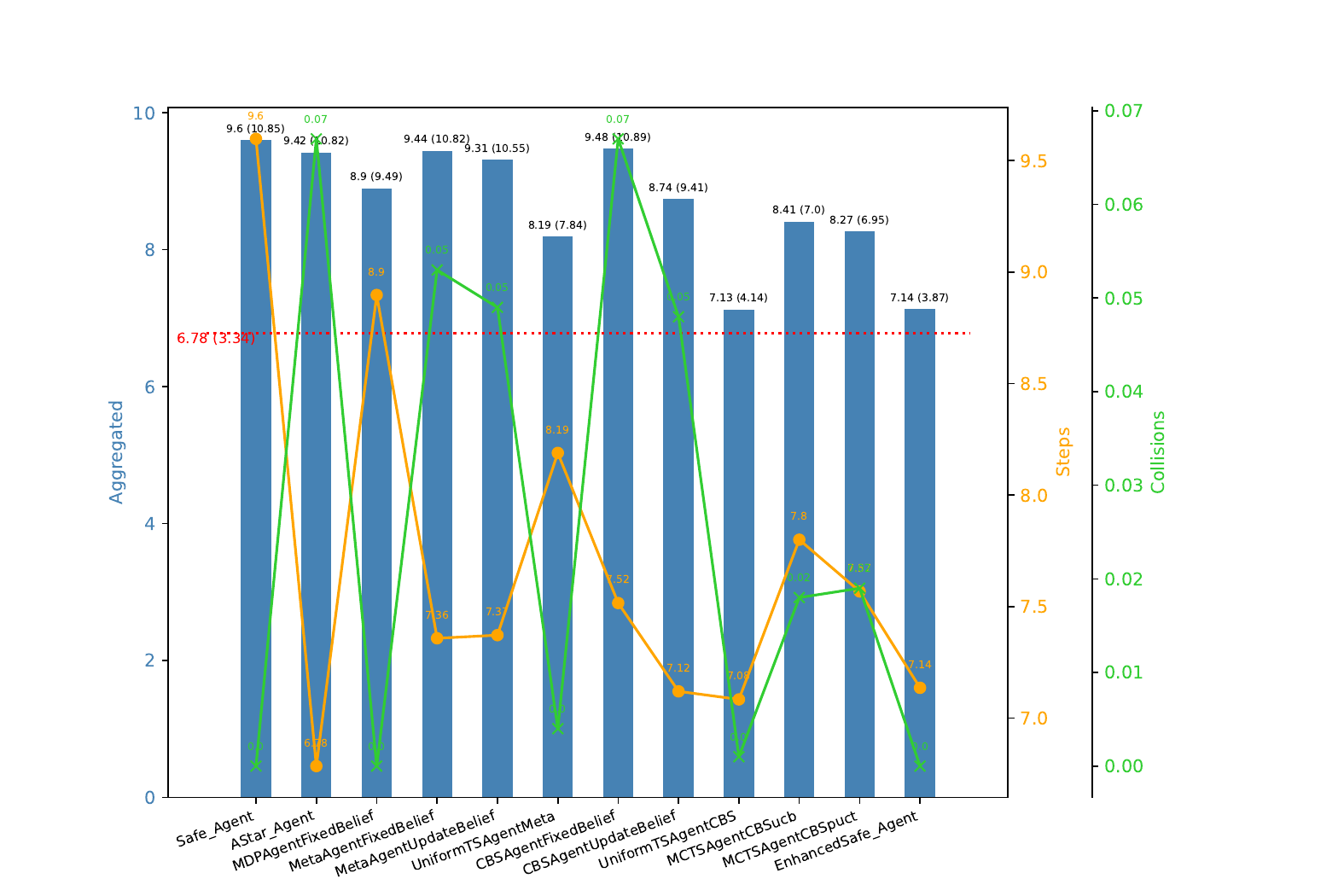}
	\hspace{-8mm}
	\includegraphics[height=35mm]{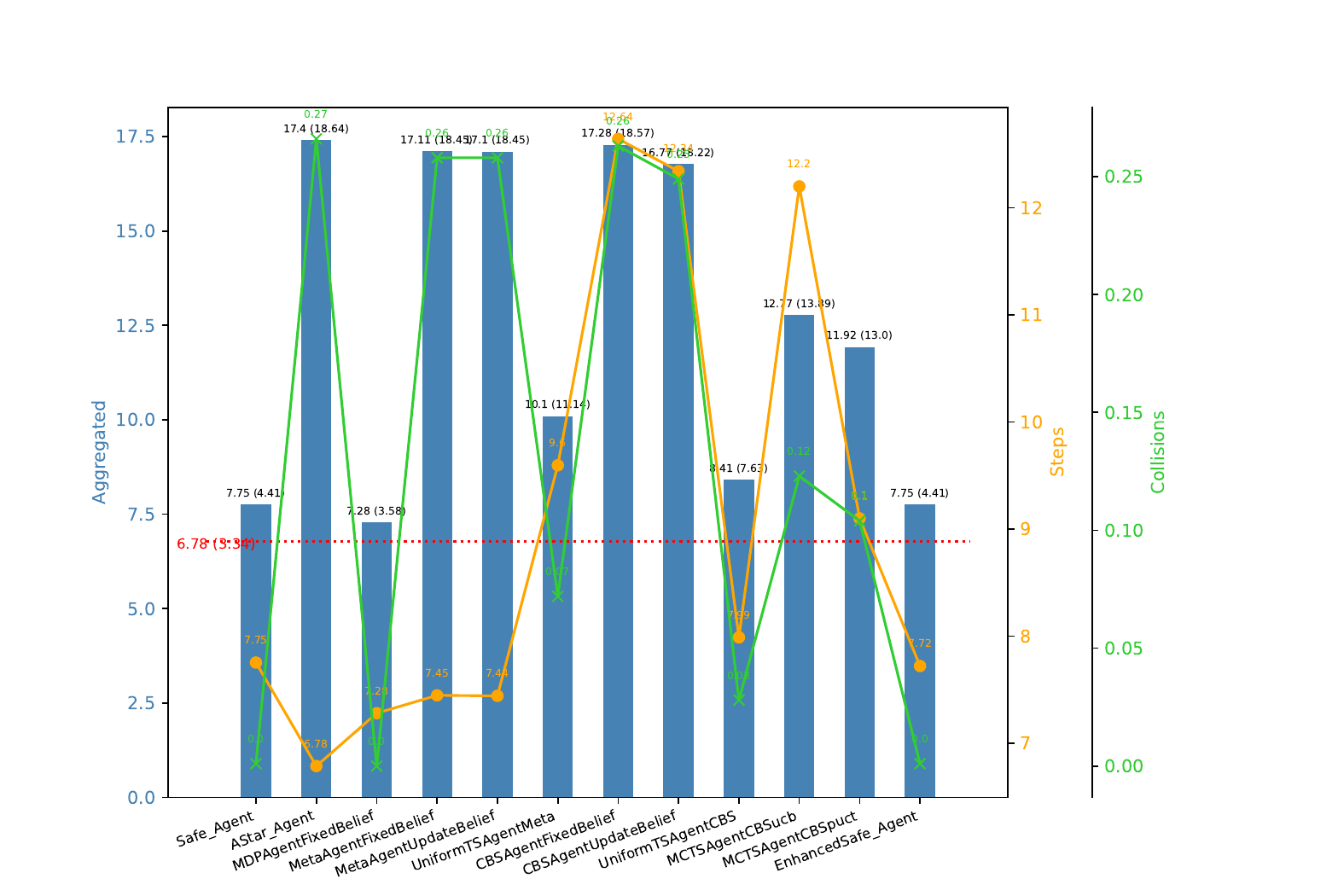}
	\hspace{-8mm}
	\includegraphics[height=35mm]{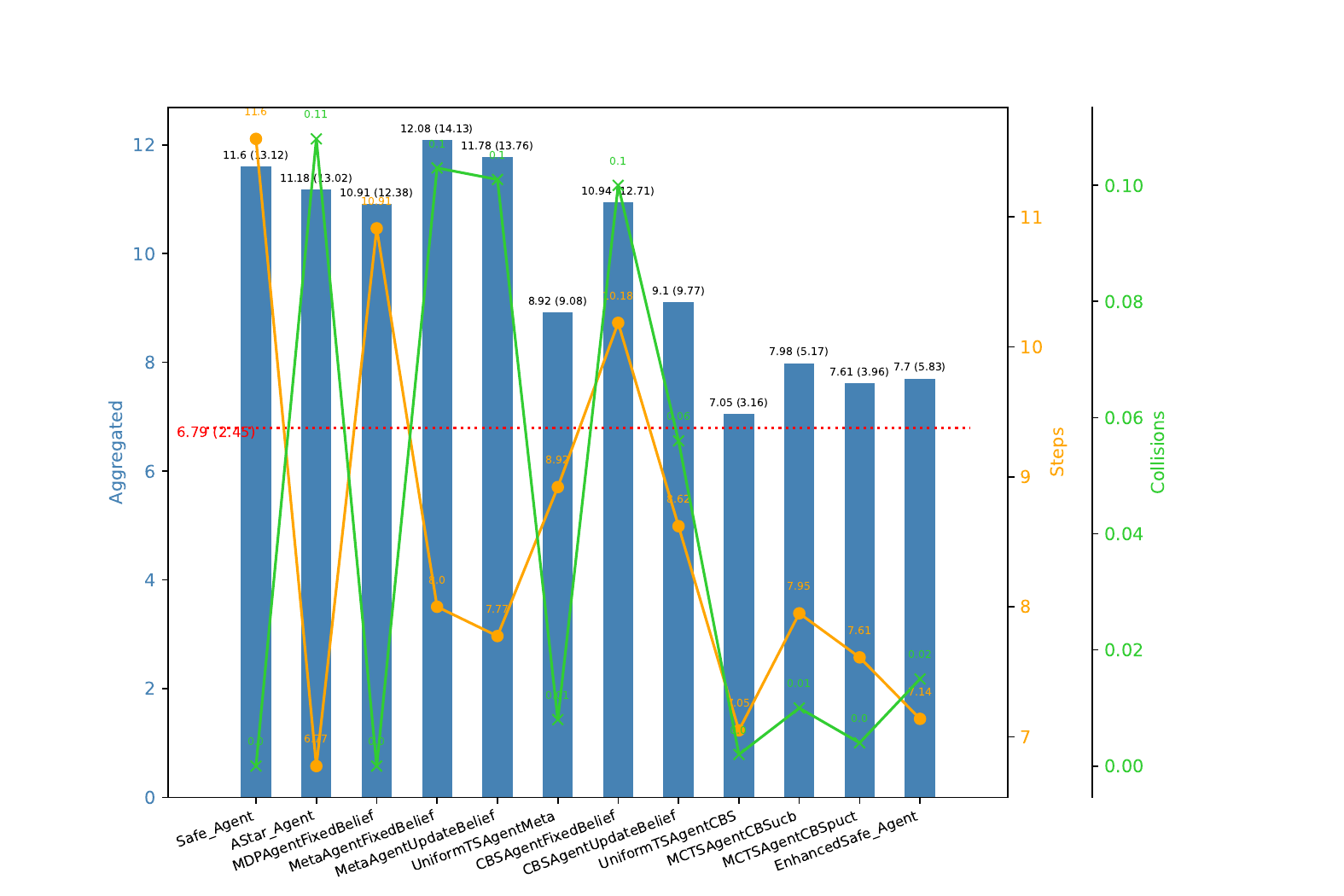}
	\caption{Detailed experiments for ``Square2a'' configurations.}
\end{figure}

\begin{figure}[!ht]
	\flushleft
	\includegraphics[height=35mm]{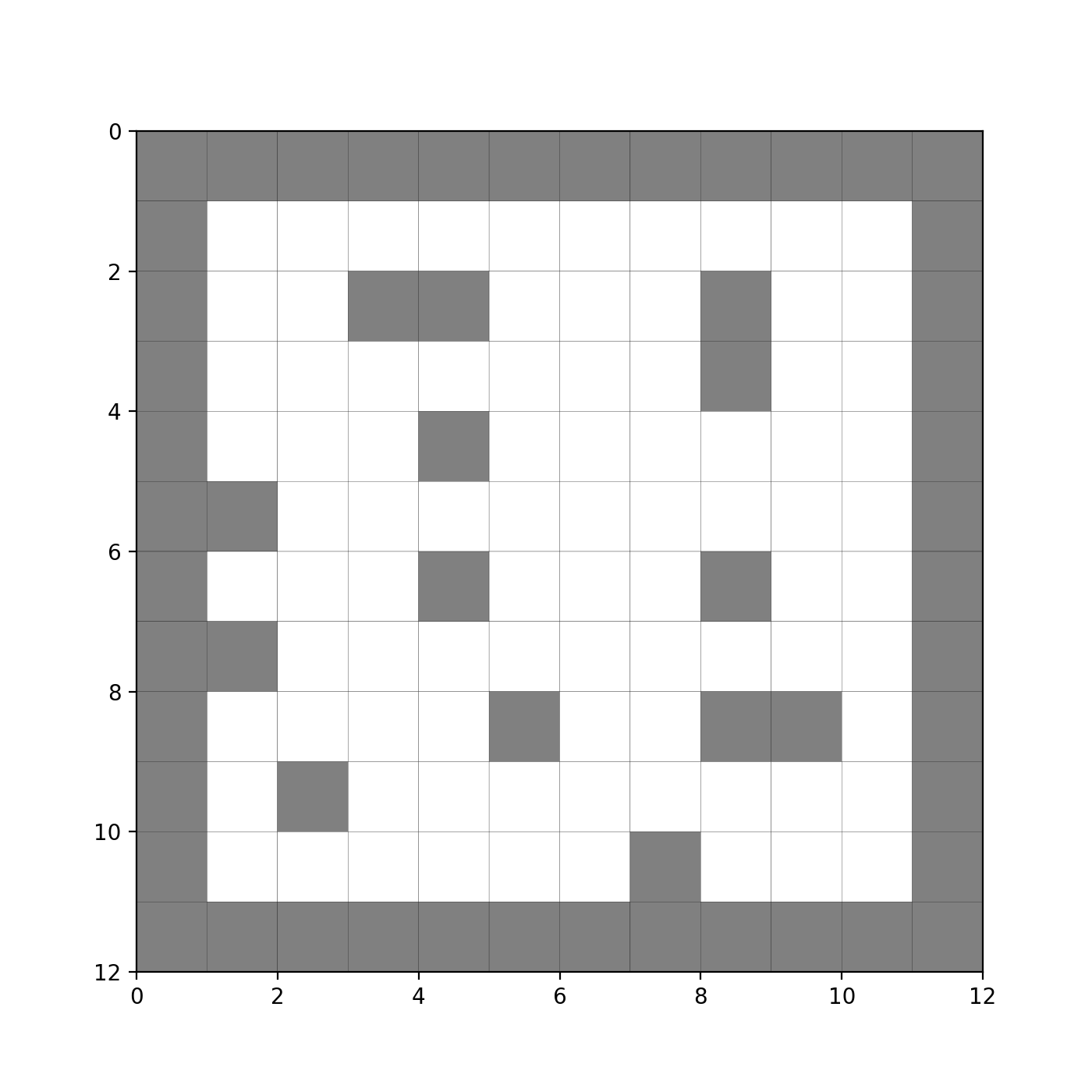}
	\hspace{-3mm}
	\includegraphics[height=35mm]{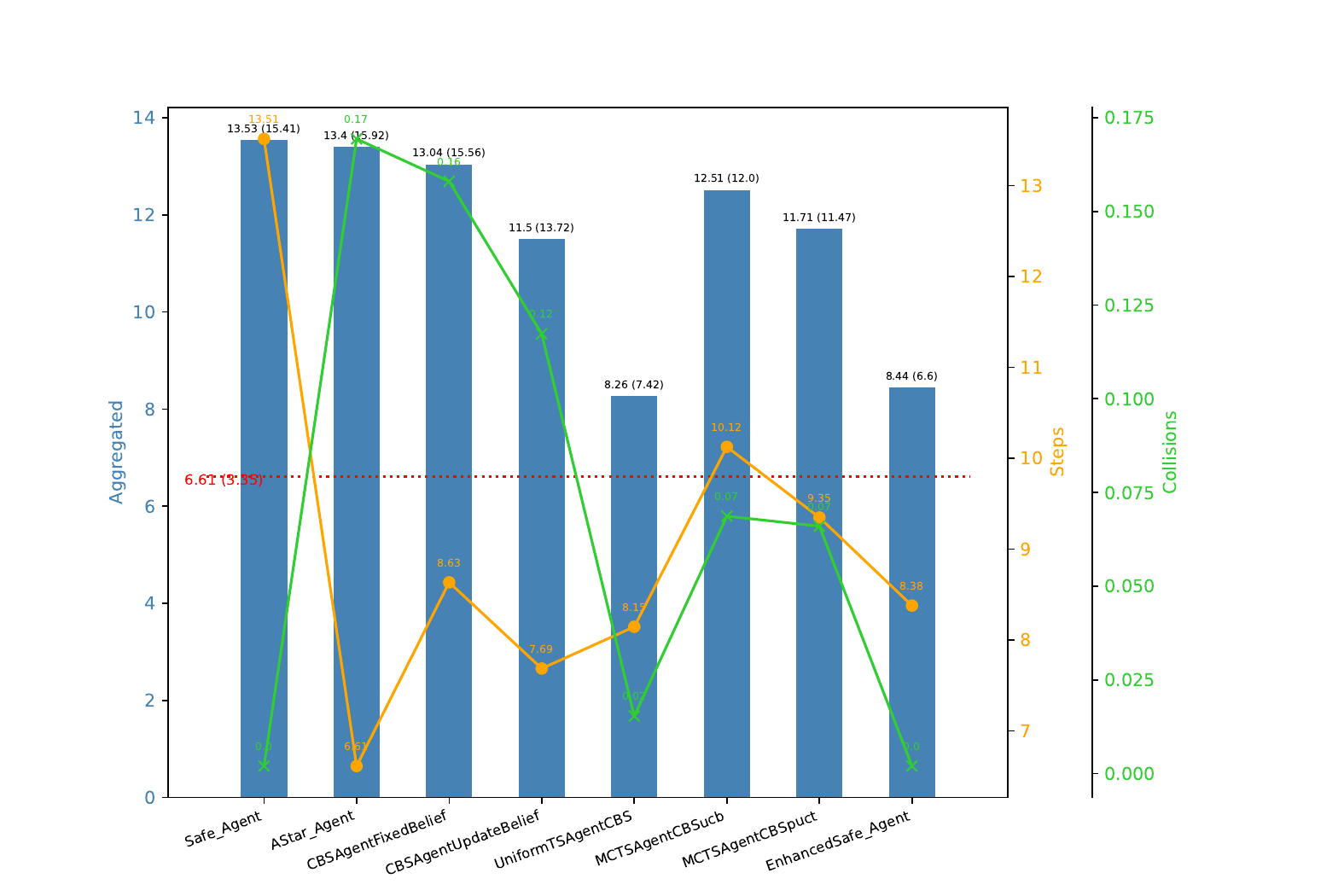}
	\hspace{-8mm}
	\includegraphics[height=35mm]{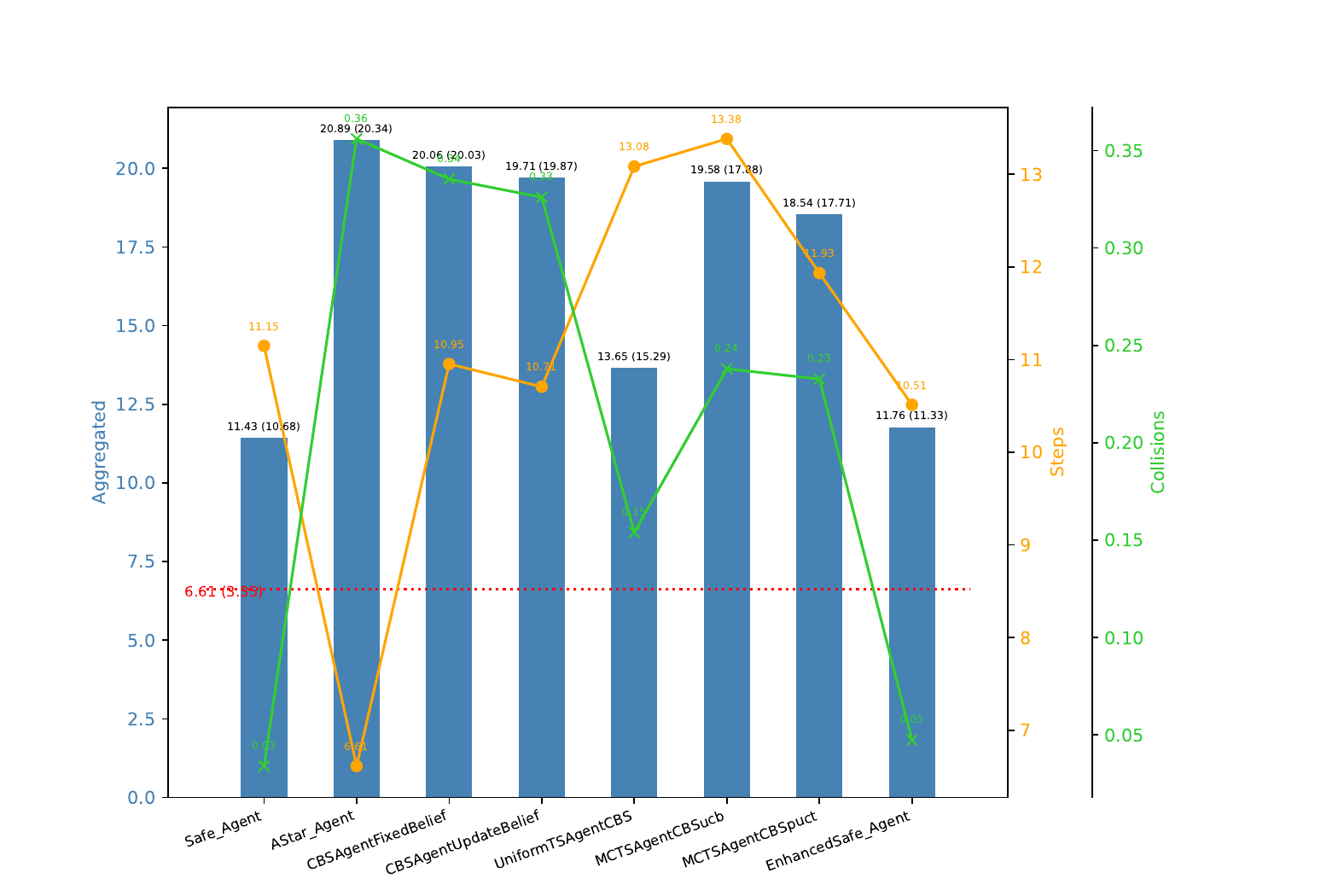}
	\hspace{-8mm}
	\includegraphics[height=35mm]{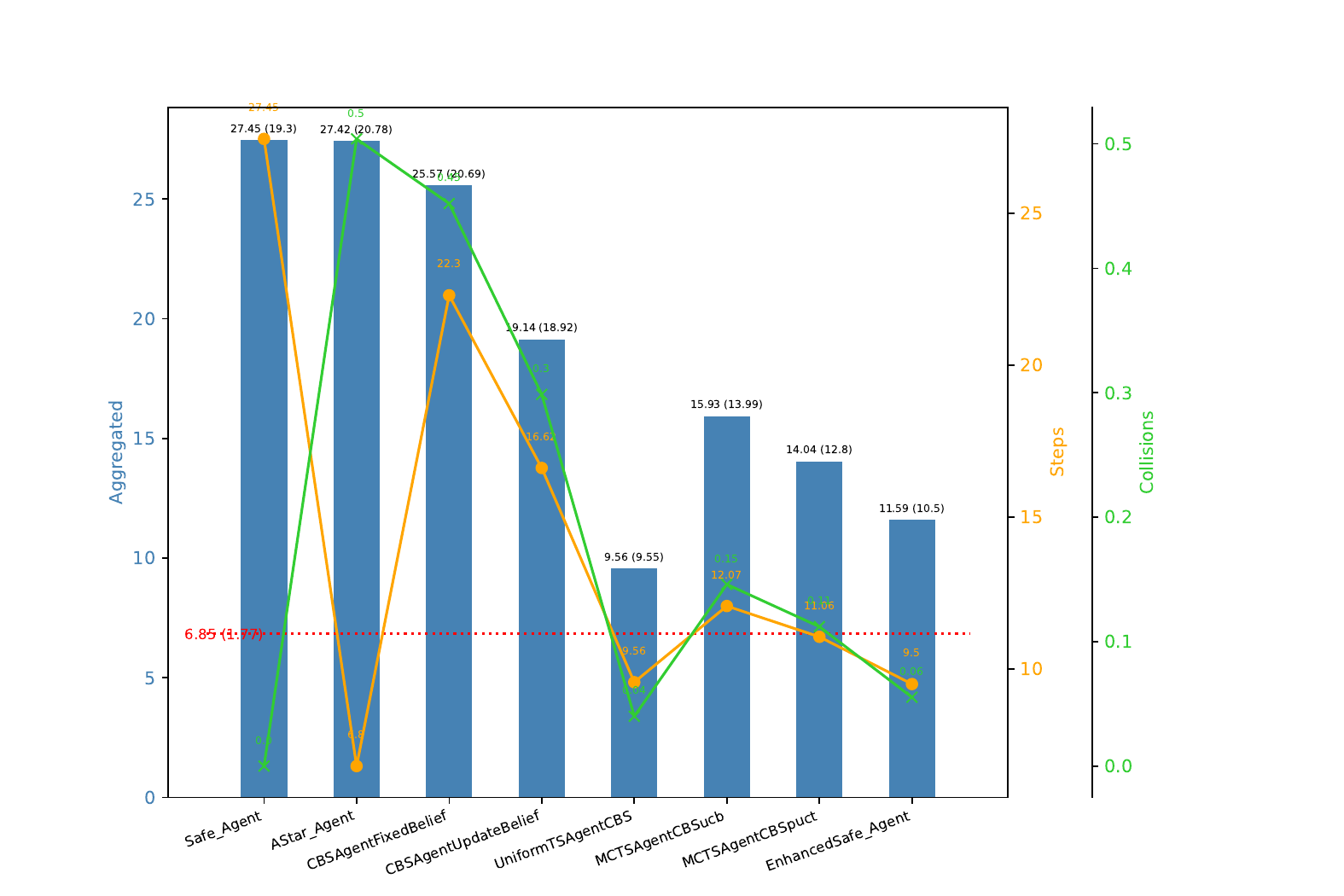}
	\caption{Detailed experiments for ``Sqaure4a'' configurations.}
\end{figure}

\begin{figure}[!ht]
	\flushleft
	\includegraphics[height=35mm]{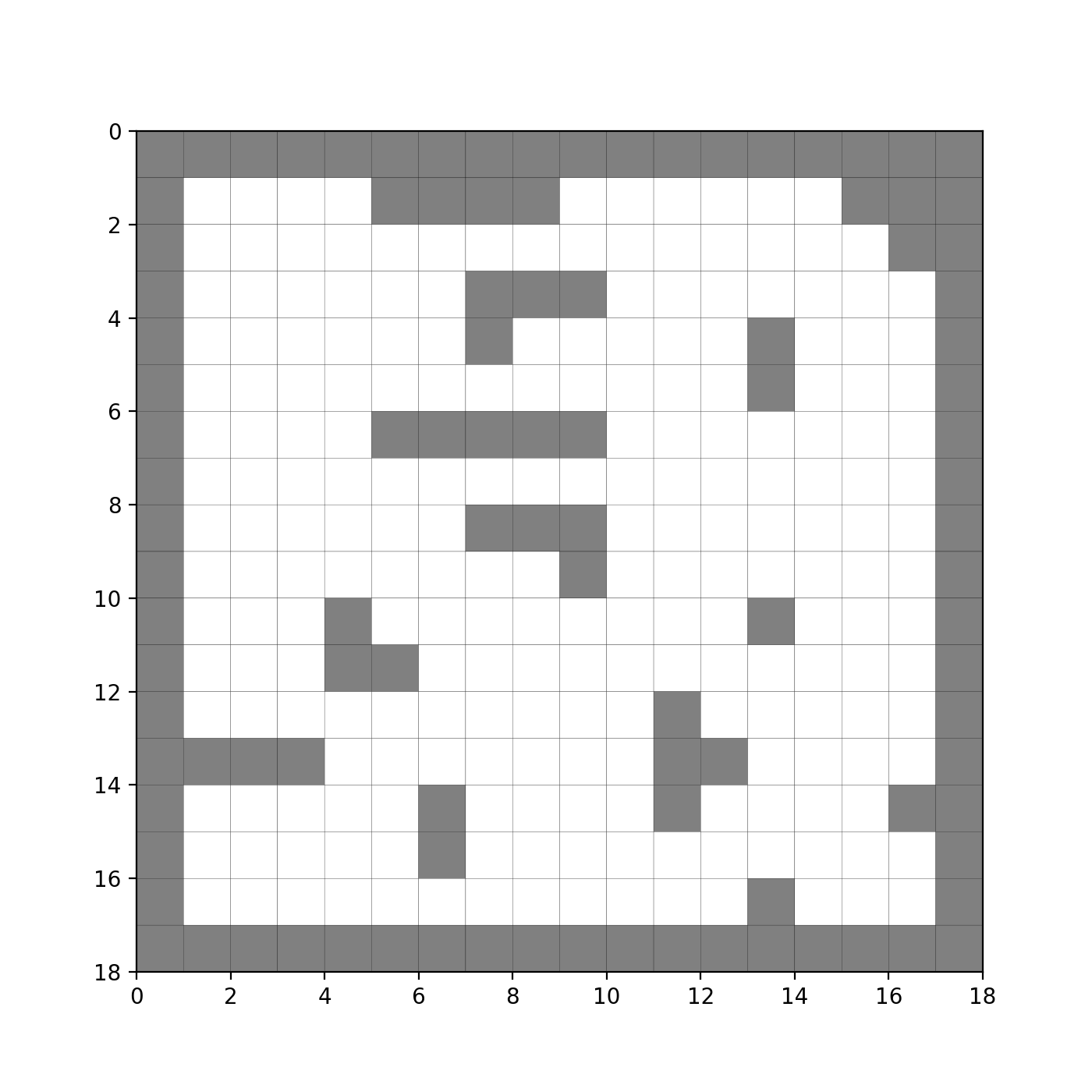}
	\hspace{-3mm}
	\includegraphics[height=35mm]{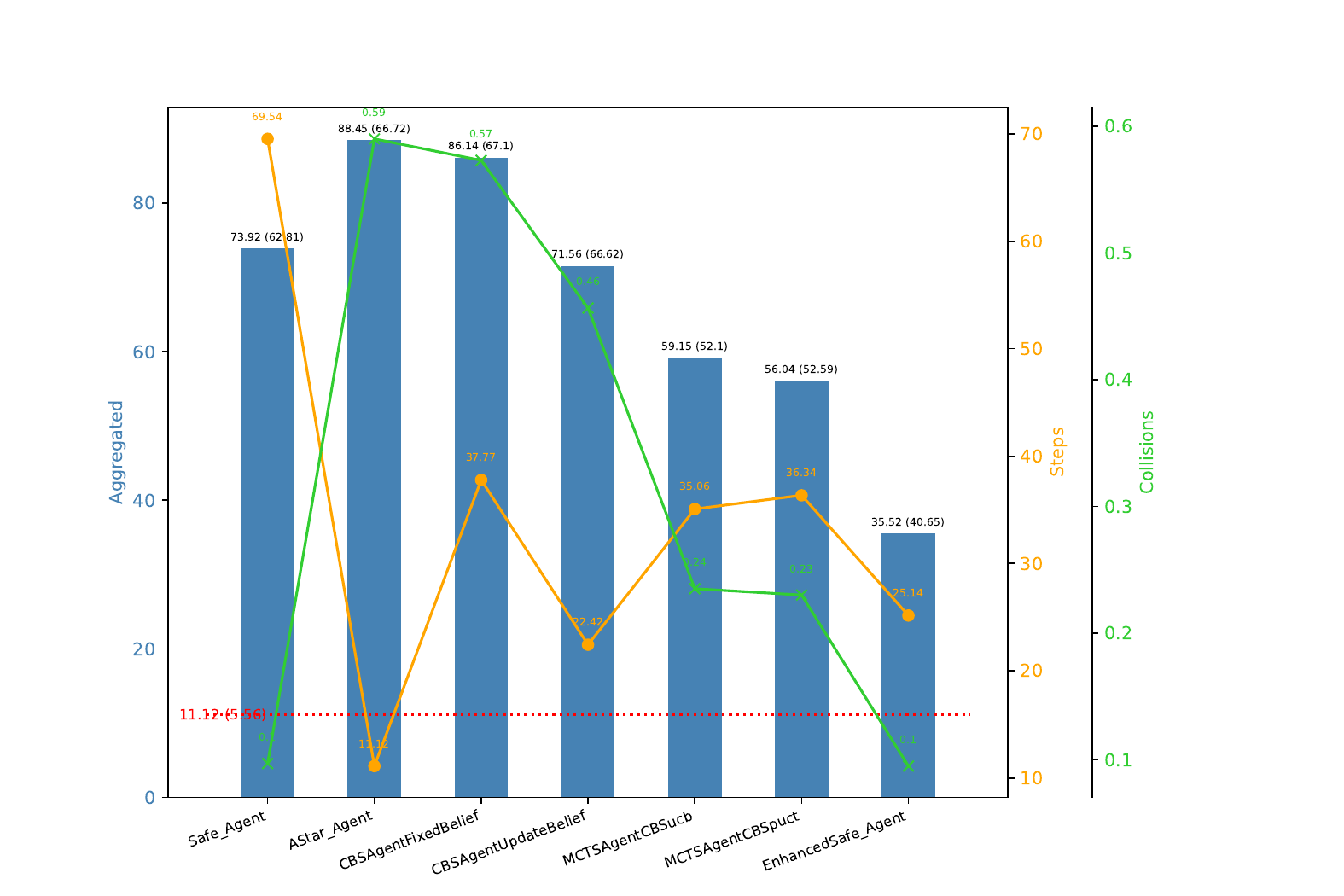}
	\hspace{-8mm}
	\includegraphics[height=35mm]{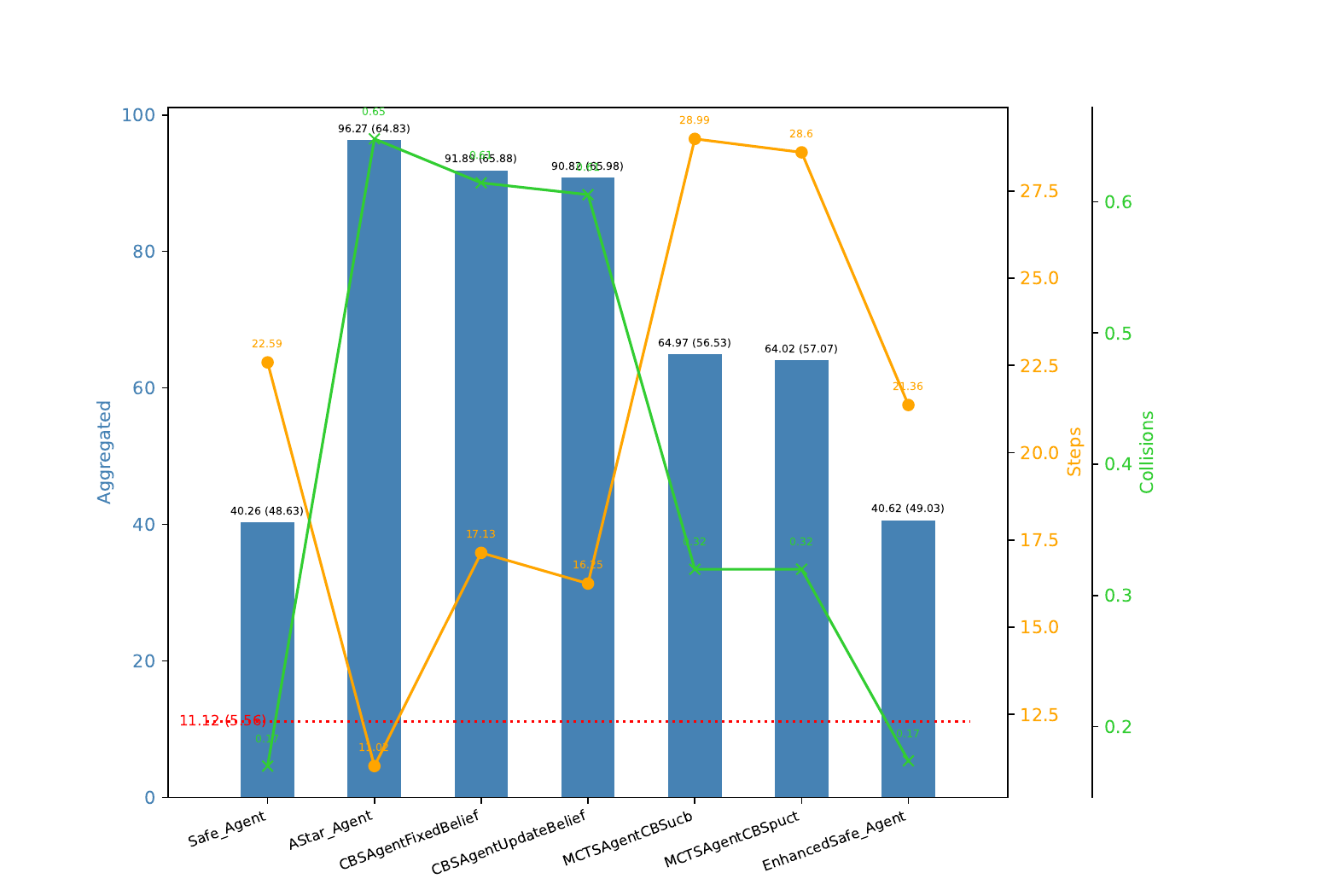}
	\hspace{-8mm}
	\includegraphics[height=35mm]{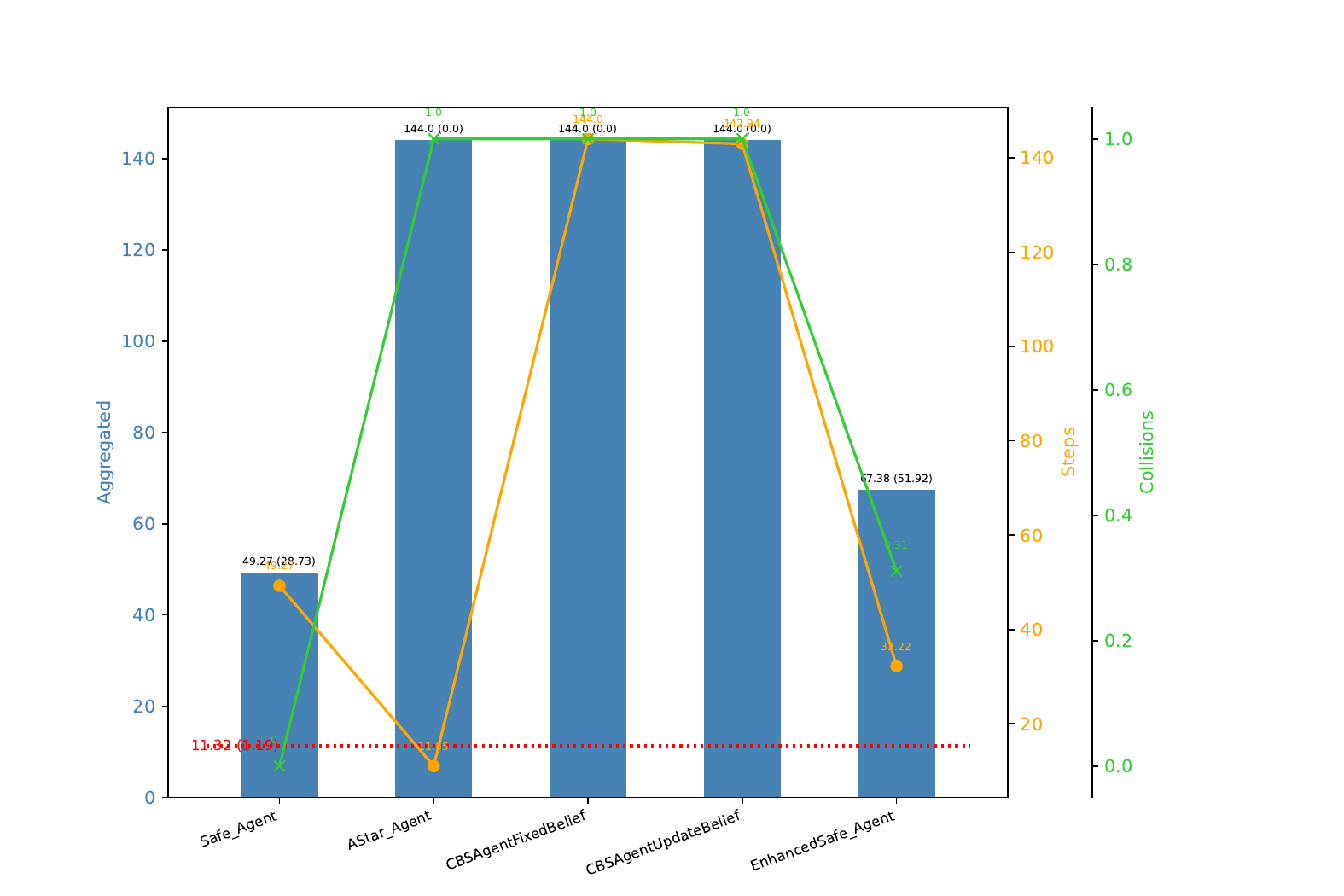}
	\caption{Detailed experiments for ``Medium20a'' configurations.}
\end{figure}

\begin{figure}[!ht]
	\flushleft
	\includegraphics[height=35mm]{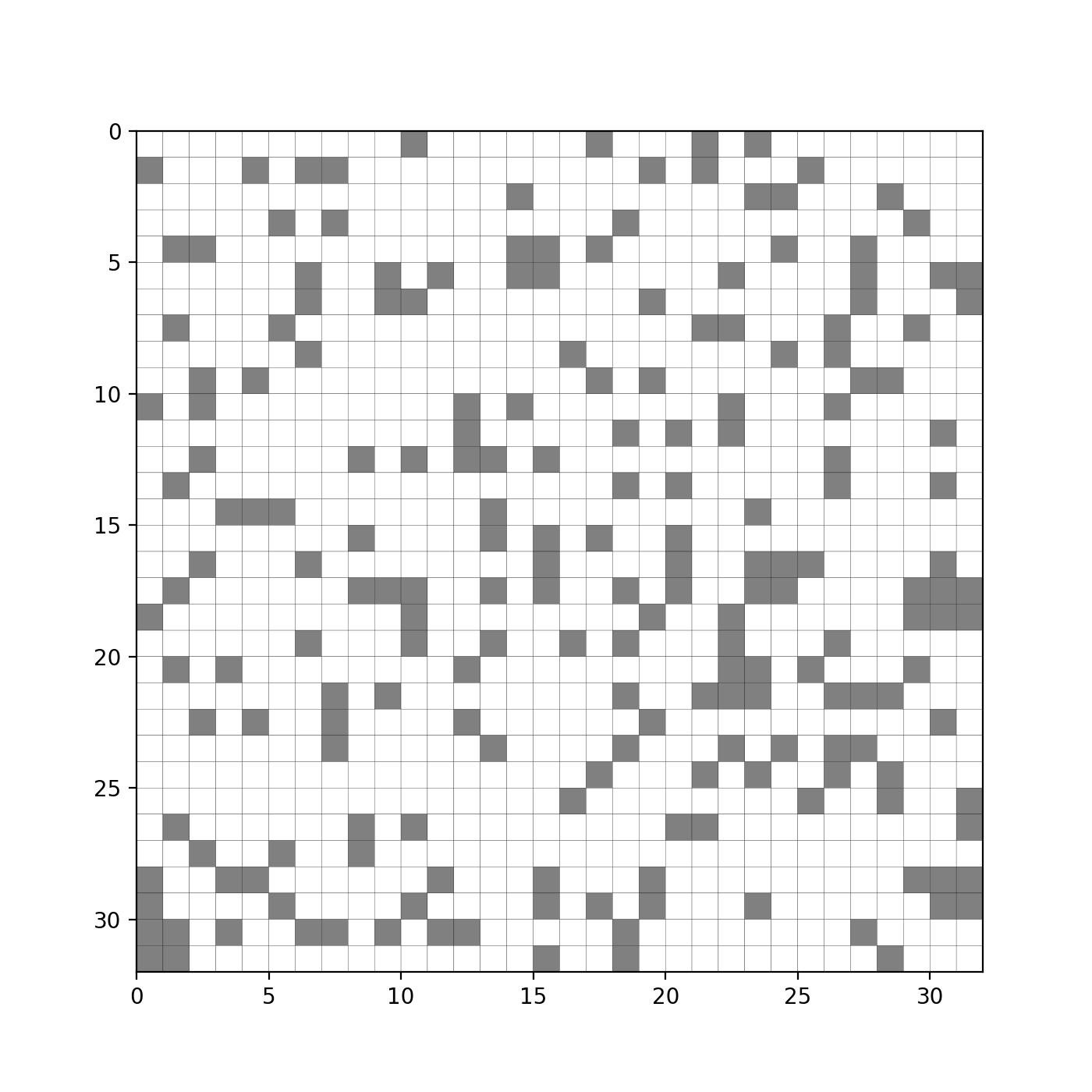}
	\hspace{-3mm}
	\includegraphics[height=35mm]{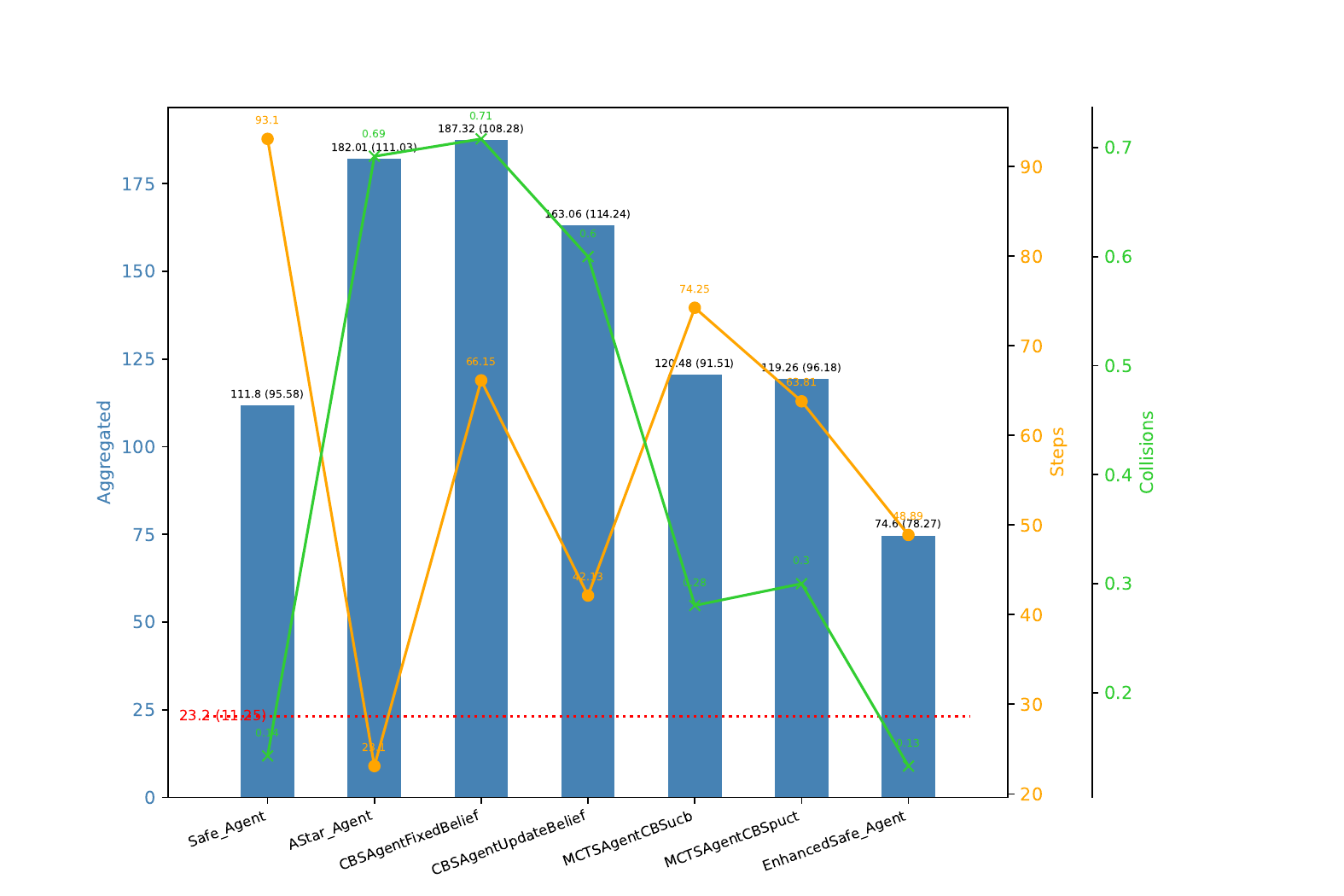}
	\hspace{-8mm}
	\includegraphics[height=35mm]{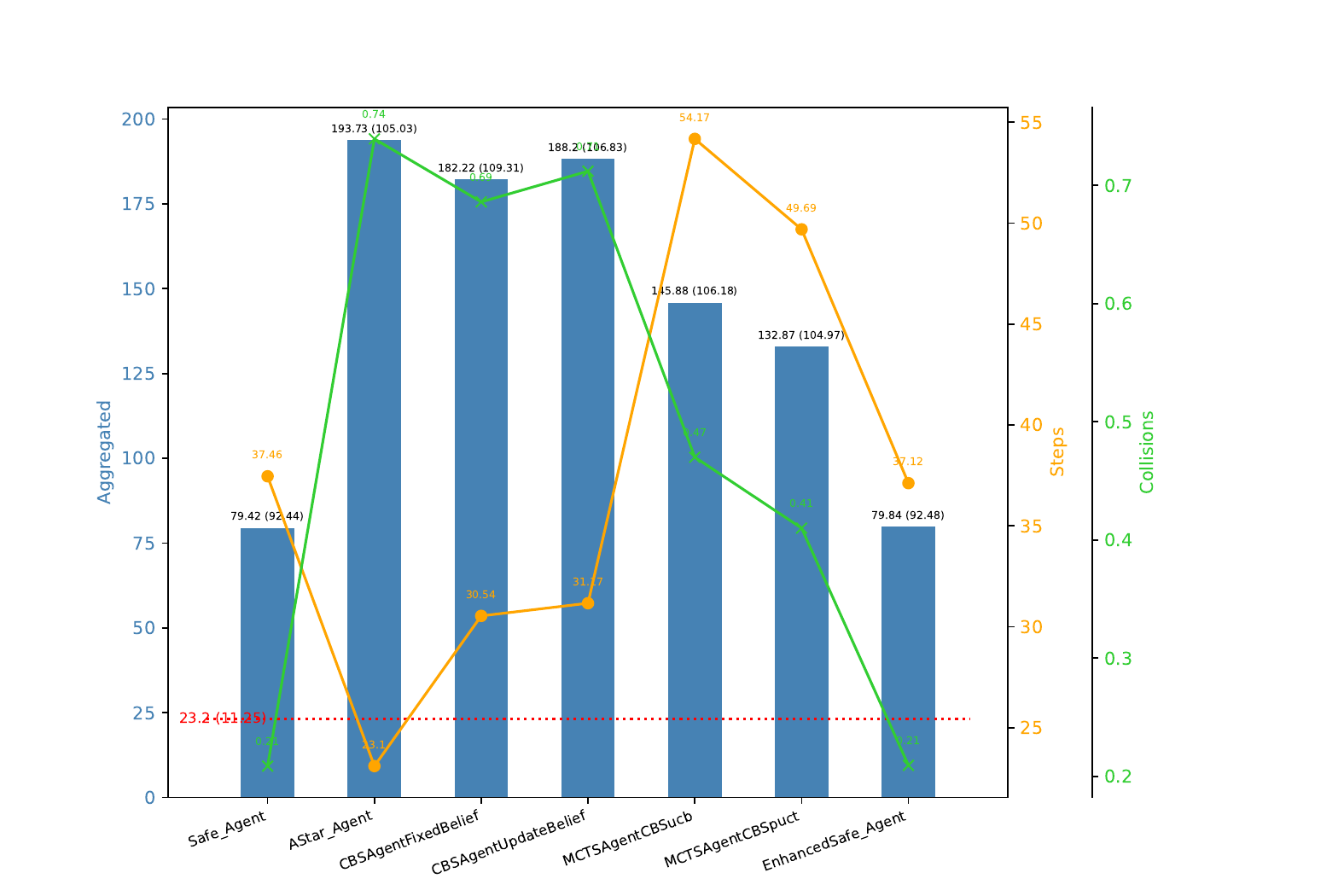}
	\hspace{-8mm}
	\includegraphics[height=35mm]{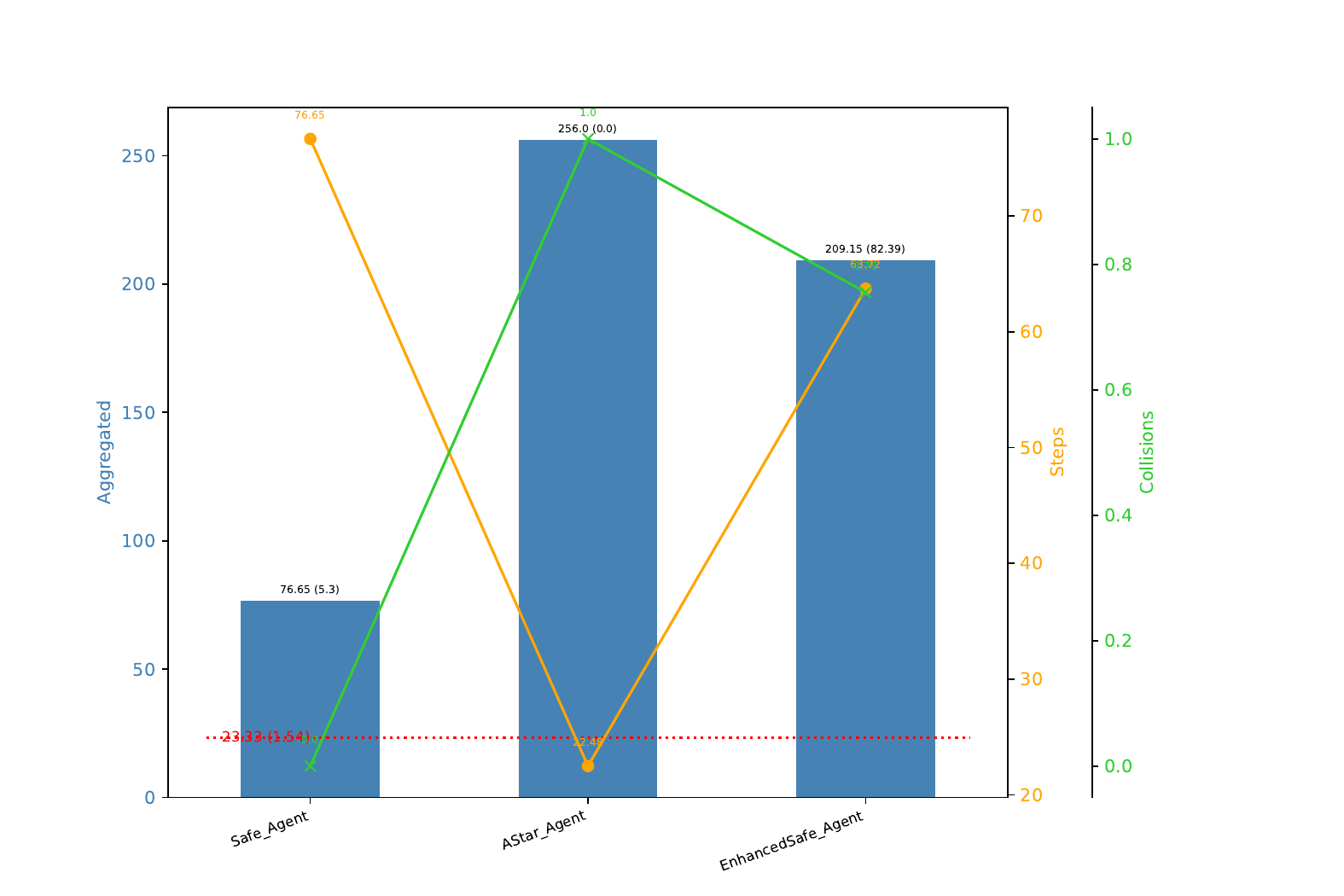}
	\caption{Detailed experiments for ``Large50a'' configurations.}
	\label{fig:detail_large}
\end{figure}

\section{Case Studies}
\label{apd:case_study}

We here present a few additional planners in Table~\ref{tab:additional_planner}, as they are computationally expensive and therefore not suitable for a multi-round testing.
Instead, in this section we show a few case studies to help one understand the capability of each planner.

\begin{table}[!htpb]
\begin{tabular}{@{}c|cccccccc@{}}
\toprule
\textbf{Planners}  & \textbf{n}   & \textbf{m} & $\mathbf\beta$ & \textbf{Collision penalty} & \textbf{\textsc{Eval}$_i$} & \textbf{Backup} & \textbf{Lookahead} & \textbf{Need replanning}  \\ \midrule
POMDP        & $\infty$     &            & 1              & $< \infty$                 &                                             &                 &                    &                           \\
QMDP         & 0            & 0          & 1              & $< \infty$                 & QMDP                                        &                 &                    &                           \\
UniformTS-MDP      & $(0,\infty)$ & $\infty$   & 1              & $< \infty$                 &                                             & exact           & full-width         & \checkmark \\
UniformTS-reactive & $(0,\infty)$ & 0   & 1              & $< \infty$                 & Euclidean distance                          & exact           & full-width         & \checkmark \\ \bottomrule
\end{tabular}
\caption{Additional planners.}
\label{tab:additional_planner}
\end{table}

\subsubsection*{Case 1}
It is predictable that by Bayesian-like belief update, the belief held by the modelling agent will converge to the underlying ground truth, if given many enough rounds.
The challenging case is when the agent is under a rather small grid world and may not have too many steps to go. In this case, the modelling agent has to plan over hypothetical beliefs upfront, instead of updating her belief per move.
As shown in Figure~\ref{fig:small_compare}, the POMDP agent only needs 5 steps, while the QMDP agent needs 6 steps and both MDP agents (with and without belief update) need the longest 7 steps.

\begin{figure}[!ht]
	\flushleft
	\includegraphics[height=100mm]{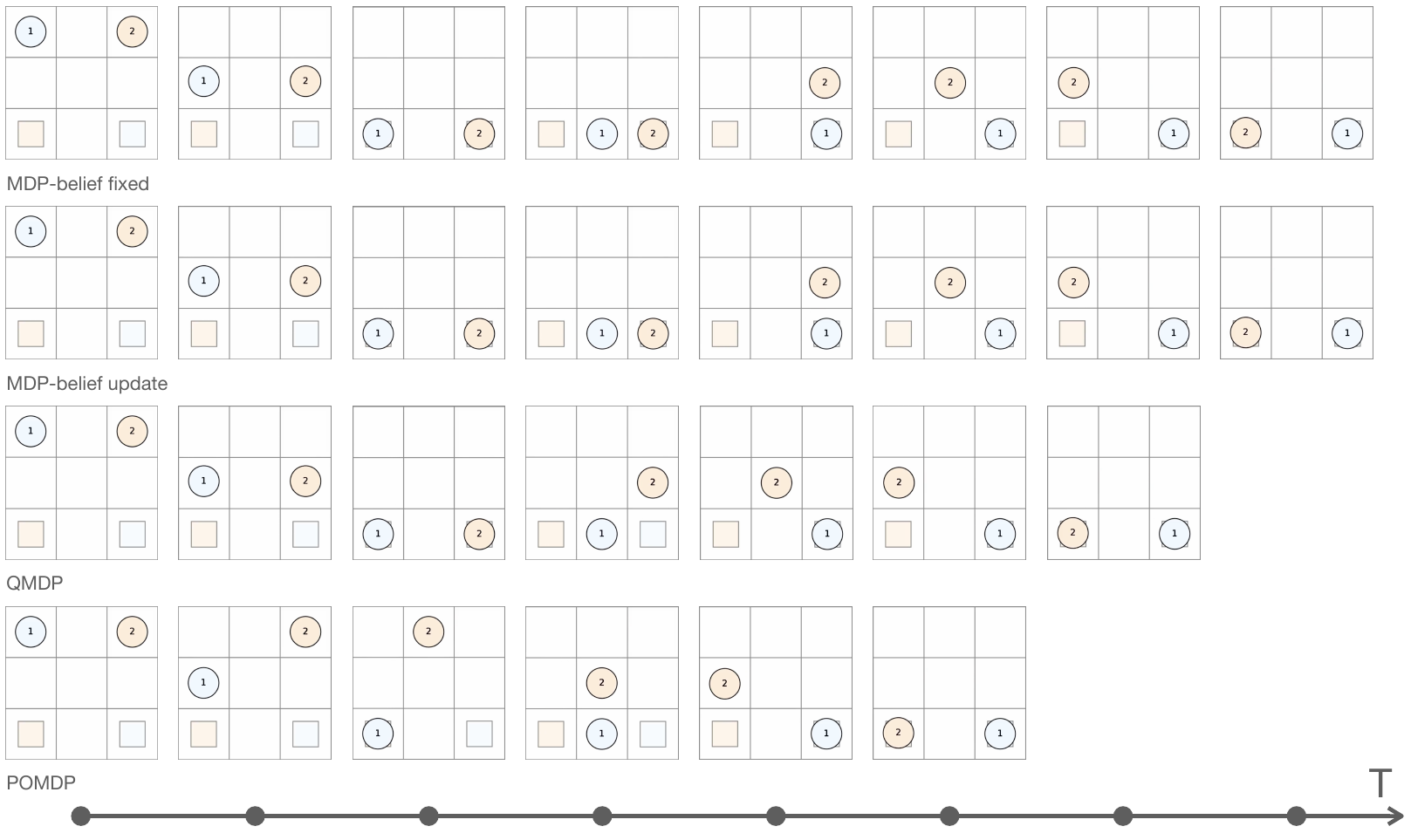}
	\caption{Detailed paths planned in a tiny situation (3x3 map two agents): agent 1 is a shorest-path agent while agent 2 is the modelling agent. Agent 1 is using a different shortest path algorithm from the way how agent 2 does in belief modelling. Squares with the corresponding colors are the respective goals.}
	\label{fig:small_compare}
\end{figure}

\subsubsection*{Case 2}
Here we show the necessity of belief update in Figure~\ref{fig:noupdate_stuck}.
Even if the modelling agent is capably of planning over states for infinite horizons, she will still easily get stuck under such a crowded environment if she does not update her belief.
With belief update, as long as she has observed the opponent has stayed in a position for a sufficiently long time, she can therefore infer that the probability of getting hit becomes negligible.


\begin{figure}[!ht]
	\flushleft
	\includegraphics[height=55mm]{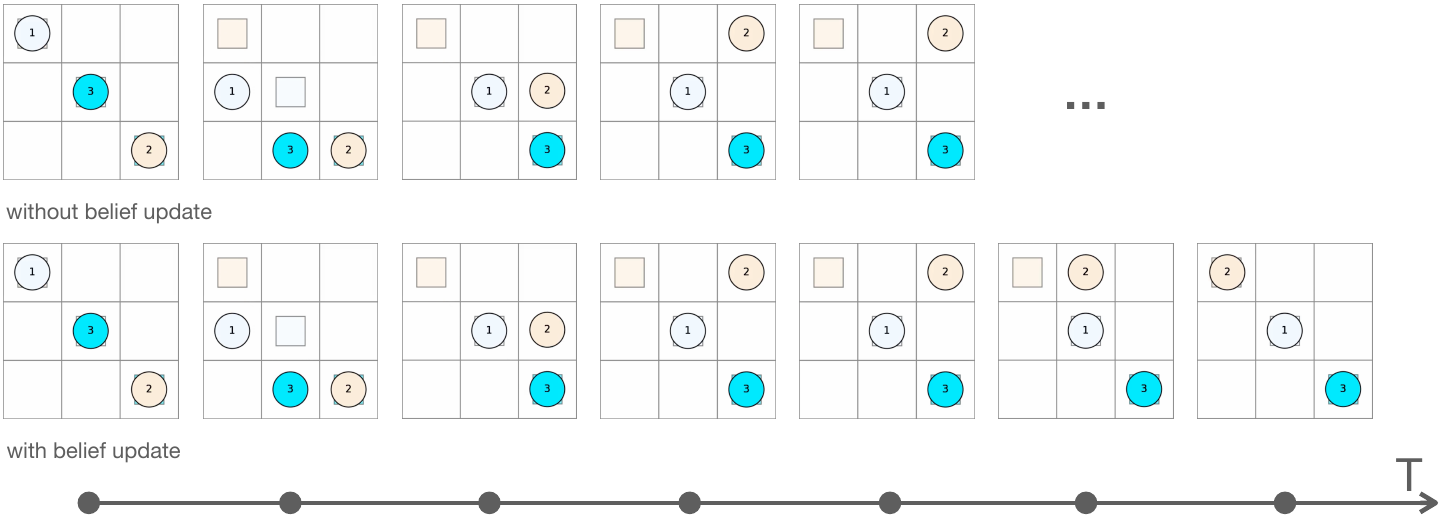}
	\caption{Different results by MDP agents with and without belief update. Agent 2 is the modelling agent using MDP planners, while agent 1 and 3 are two shortest-path agents using different algorithms. Squares with the corresponding colors are the respective goals.}
	\label{fig:noupdate_stuck}
\end{figure}

\subsubsection*{Case 3}
Figure~\ref{fig:probe} illustrate a special type of actions, what we term as \textit{probing actions}.
A probing action is such an action by which the modelling agent gains more information about the opponents without sacrificing her own future return.
In both cases, agent 2 is at C5 when agent 1 is at D1.
For agent 2, going down is the only action that leads to shortest paths, but at the same time of her taking this action, the action made by agent 1 will reveal more information about her underlying goal. More specifically, in the right figure, when agent 2 proceeds to D5 by this probing action, she realizes that agent 1's goal must be below row E and hence decides to go left for the next step.
In contrast, the planners used in the left figure are not able to utilize the probed information. 

\begin{figure}[!ht]
	\centering
	\includegraphics[width=105mm]{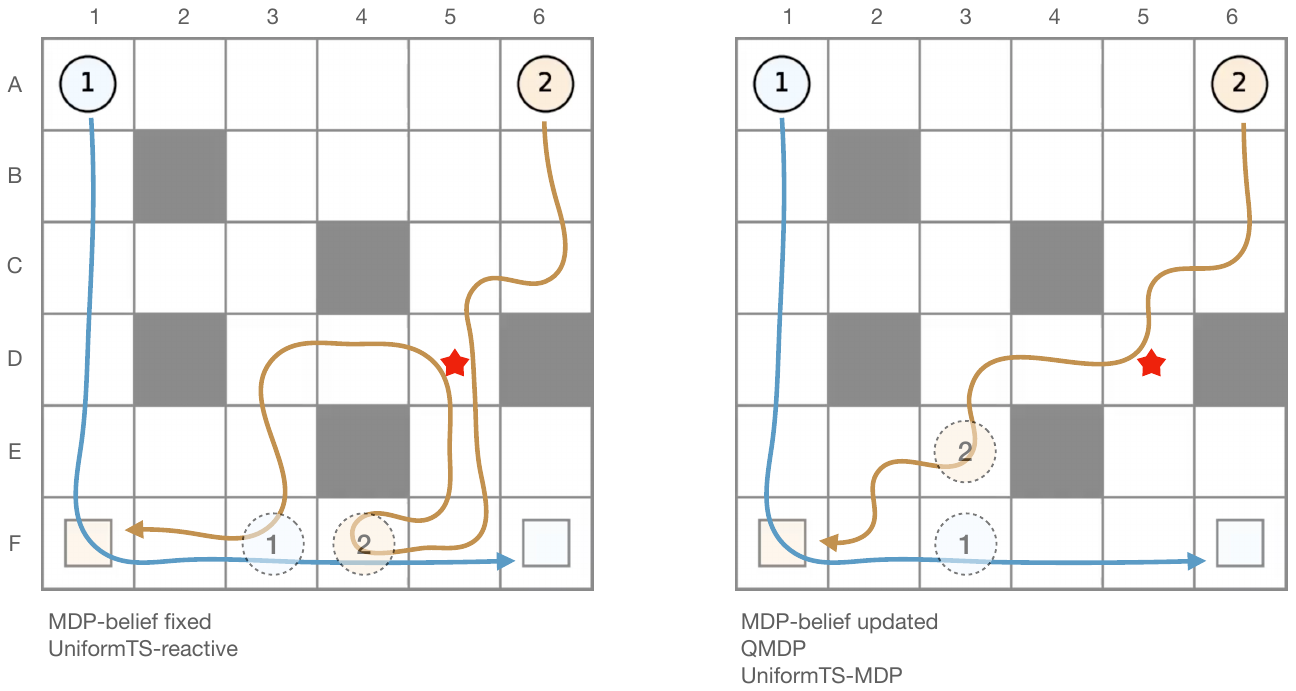}
	\caption{Agent 2 is the modelling agent using any of the planners noted at the lower left corner of each sub-figure, while agent 1 is a shortest-path agent. Squares with the corresponding colors are the respective goals.}
	\label{fig:probe}
\end{figure}

\subsubsection*{Case 4} Figure~\ref{fig:probe} presents a situation where the modelling agent will sooner or later get stuck if she only does myopic planning like \textit{safe-agents}. 
The enhanced version can later resolve it but has to wait until the others stop moving.
The full-width tree search agent here does only one-depth lookahead search, and therefore, also takes a late turn,
while the MCTS agent gets around the opponents in a much earlier time primarily because she manages to lookahead for more steps with the help of heuristic node selection within the given budget.

\begin{figure}[!ht]
	\centering
	\includegraphics[width=170mm]{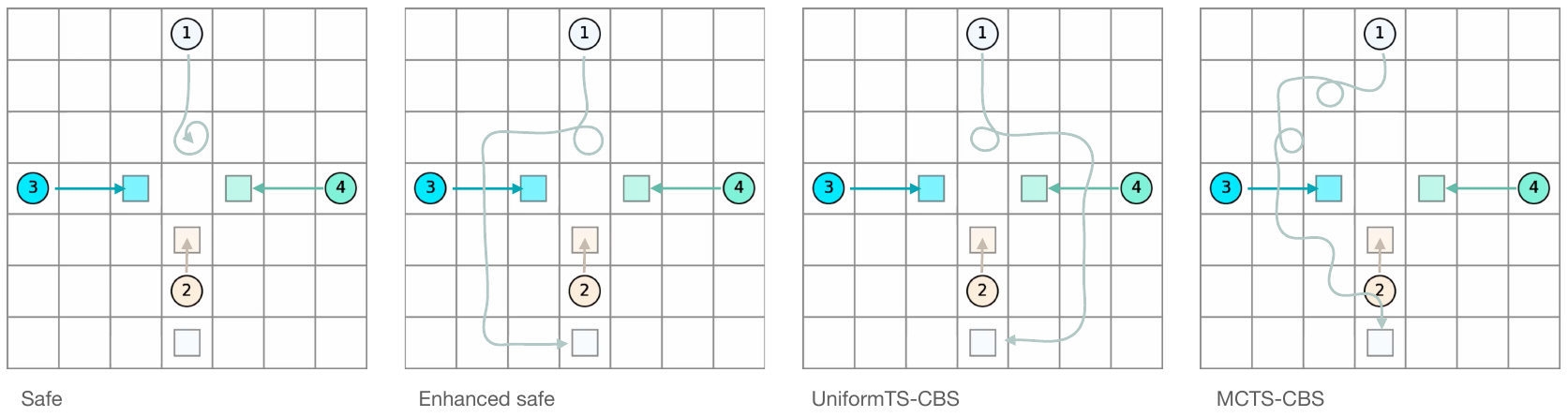}
	\caption{Agent 1 is the modelling agent using the respective planners noted at the lower left corner of each sub-figure, while agent 2, 3, and 4 are shortest-path agents. Squares with the corresponding colors are the respective goals.}
	\label{fig:probe}
\end{figure}

\subsubsection*{Case 5}
For a large map with a dense population of agents, we attach two videos to show the capability of our planners, especially the potential of tree search.
In both videos, we control agent 1 as the modelling agent, while all the others are naive A$^*$ agents ignoring others, and therefore, may collide into each other. Initially agents are randomly spawn on the map. In \textbf{large50a\_cbs.mp4} agent 1 is born at the upper left corner, while in \textbf{large50a\_mctscbs.mp4} agent 1 is born at the middle right part. In the former one  agent 1 directly enquires CBS plans at each move, while in the latter one she instead uses CBS plans as node heuristics and applies MCTS with pUCT to improve them in real-time. As one can see the vanilla version encounters many intermediate collisions, while the MCTS improved one perfectly avoids all collisions and finds a fairly short path.

%

\section{Computing Time}
Besides a comparison over the performances of our proposed planners, we here attach two additional tables for computing times, in order to help readers make more informed decisions on selecting suitable planners. 
Table~\ref{tab:comp_time_main} shows computing times for the planners in Table~\ref{tab:performance},
while Table~\ref{tab:comp_time_additional} are for the extra planners mentioned in Table~\ref{tab:additional_planner} in Appendix~\ref{apd:case_study}, with the following elaboration:
\begin{enumerate}
	\item Each cell records the average computing time in seconds, except for the POMDP planner where we explicitly write roughly 2 hours. We run 10 times, as the computing time is quite stable, for each one  to calculate the average (just two runs for the POMDP planner), with the standard deviation omitted to make the table more concise.
	\item The additional Tiny2a configuration means the one (3-by-3 maps with 2 agents) we present in Figure~\ref{fig:small_compare} in Appendix~\ref{apd:case_study}.
	\item Some planners need no replanning, therefore, we report the time they take for the initial plans.
	\item ``/'' means the planner is not feasible in that scenario.
\end{enumerate}

\begin{table}[!ht]
\small
\centering
\begin{tabular}{@{}lrrrrrrrrr@{}}
\toprule
                  & \textbf{Astar} & \textbf{Safe} & \textbf{MDP} & \textbf{RL} & \textbf{UniformTSRL} & \textbf{CBS} & \textbf{UniformTSCBS} & \textbf{MCTSCBSucb} & \textbf{MCTSCBSpuct} \\ \midrule
\textbf{Small2a}   & 6.40E-04                           & 3.96E-04                                    & 3.38E+00                                       & 1.67E-06                                        & 1.85E-01                                   & 1.24E-01                                       & 7.65E+01                                  & 3.87E+00                                & 3.93E+00                                 \\
\textbf{Square2a}  & 1.54E-03                           & 1.64E-03                                    & 1.11E+02                                       & 1.57E-06                                        & 2.57E-01                                   & 1.27E-01                                       & 7.80E+01                                  & 6.53E+00                                & 6.12E+00                                 \\
\textbf{Square4a}  & 1.55E-03                           & 1.62E-03                                    & /                                              & /                                               & /                                          & 6.47E-02                                       & 3.67E+01                                  & 4.12E+00                                & 4.34E+00                                 \\
\textbf{Medium20a} & 2.44E-03                           & 4.96E-03                                    & /                                              & /                                               & /                                          & 7.17E-02                                       & /                                         & 1.73E+01                                & 3.16E+01                                 \\
\textbf{Random}    & 1.29E-02                           & 1.52E-02                                    & /                                              & /                                               & /                                          & 6.04E-02                                       & /                                         & 1.81E+02                                & 2.19E+02                                 \\ \bottomrule
\end{tabular}
\caption{For the planners in Table 3}
\label{tab:comp_time_main}

\small
\begin{tabular}{@{}lrrrr@{}}
\toprule
                 & \textbf{POMDP*} & \textbf{QMDP*} & \textbf{UnifTSMDP} & \textbf{UnifTS-reactive} \\ \midrule
\textbf{Tiny2a}  & $\sim$2hr       & 4.16E-01       & 5.52E+01           & 4.48E-02                 \\
\textbf{Small2a} & /               & 5.10E+01       & 8.67E+01           & 7.70E-02                 \\ \bottomrule
\end{tabular}
\caption{For the planners in Table 5 in Appendix D}
\label{tab:comp_time_additional}
\end{table}


\end{document}